\begin{document}

\title{Passive Model Learning of Visibly \\ Deterministic Context-free Grammars}

% Multiple authors: See https://tex.stackexchange.com/questions/20292/multiple-authors-in-a-journal-style-llncs-document
\author{Edi Mu\v{s}kardin\inst{1}\orcidID{0000-0001-8089-5024} \\ \and Tamim Burgstaller\inst{2}\orcidID{0009-0007-4522-8497}}

\authorrunning{Mu\v{s}kardin and Burgstaller}

\institute{Silicon Austria Labs \\
\email{edi.muskardin@silicon-austria.com}
\and 
Graz University of Technology \\
\email{tamim.burgstaller@tugraz.at}}

\maketitle              

\begin{abstract}
We present PAPNI, a passive automata learning algorithm capable of learning deterministic context-free grammars, which are modeled with visibly deterministic pushdown automata. PAPNI is a generalization of RPNI, a passive automata learning algorithm capable of learning regular languages from positive and negative samples. PAPNI uses RPNI as its underlying learning algorithm while assuming a priori knowledge of the visibly deterministic input alphabet, that is, the alphabet decomposition into symbols that push to the stack, pop from the stack, or do not affect the stack.

In this paper, we show how passive learning of deterministic pushdown automata can be viewed as a preprocessing step of standard RPNI implementations. We evaluate the proposed approach on various deterministic context-free grammars found in the literature and compare the predictive accuracy of learned models with RPNI.

\keywords{Passive Automata Learning  \and Pushdown Automata \and Model Learning \and Context-free Grammars.}
\end{abstract}

\section{Introduction}

Automata learning, also known as model learning~\cite{DBLP:conf/dagstuhl/AichernigMMTT16}, is a field of theoretical computer science that concerns itself with automatic identification and modeling of a system's structure and behavior from observations. Learned models are encoded as finite-state machines, most often as DFAs, Mealy machines, and even Markov Decision Processes~\cite{DBLP:conf/fm/TapplerA0EL19}. Nowadays automata learning is often used in automated modeling, reverse engineering, and analysis of complex steteful systems, such as network protocols~\cite{DBLP:conf/icst/AichernigMP21,DBLP:journals/fmsd/PferscherA22,DBLP:conf/fm/PferscherWAM23}, software based on finite-state machines~\cite{DBLP:conf/spin/Ganty24}, and even of machine learning components~\cite{DBLP:conf/cdmake/MayrY18,DBLP:journals/ml/WeissGY24,DBLP:conf/icst/TapplerMAK24}.

Automata learning can be divided into two main categories. \textit{Active} automata learning learns the behavior of the system under learning (SUL) by actively interacting with the SUL, while \textit{passive} automata learning learns solely from provided data, e.g., in the form of system traces. Notable active automata learning algorithms are $L^*$~\cite{DBLP:journals/iandc/Angluin87}, $KV$~\cite{DBLP:journals/cacm/Valiant84}, and $TTT$~\cite{DBLP:conf/rv/IsbernerHS14}, while RPNI~\cite{rpni} and EDSM~\cite{DBLP:conf/icgi/LangPP98} and popular choices in the passive learning paradigm. While these algorithms differ in their approach to automata learning, they all encode the SUL's behavior as a regular language, that is, with automata such as DFAs and Mealy machines. 

However, the behavior of many real-world systems cannot be captured by regular languages alone. For instance, consider the language of balanced parentheses, e.g., (), (()), or (()(())). Recognizing such a language requires matching each opening parenthesis with a corresponding closing one, which involves keeping track of nested dependencies. Finite automata cannot deal with such stackful behavior; attempting to model this as a regular language would lead to an infinite number of states (due to ever-increasing parentheses nesting depth). In contrast, context-free grammars~\cite{DBLP:books/daglib/0016921} allow us to represent such stack-based behavior, which we can then encode with pushdown automata. In this paper, we show how RPNI can be extended to handle stackful behavior, with the assumption that we know the decomposition of the input alphabet into internal, call, and return symbols.

This paper is structured as follows: in Section~\ref{sec:related_work}, we outline related work, mostly focusing on existing passive automata learning methods. Section~\ref{sec:prelim} presents necessary preliminaries and definitions. In the Section~\ref{sec:method}, we present the algorithm behind PAPNI, and we reason about its correctness in Section~\ref{sec:proof}. We evaluate PAPNI in Section~\ref{sec:evaluation} and discuss possible limitations and their mitigations of PAPNI in Section~\ref{sec:limitations}. Finally, we conclude the paper and outline possible future work in Section~\ref{sec:conclusion}.

\section{Related Work}\label{sec:related_work}

In this section, we briefly outline the passive automata learning landscape, as well as the existing method of active inference of deterministic context-free grammars. 

The most notable passive automata learning algorithm for learning deterministic systems is Regular Positive and Negative Inference (RPNI)~\cite{rpni,10.5555/1830440}. The name of the algorithm reflects its origin as a DFA inference method, given that DFAs distinguish between positive and negative sequences. RPNI was later extended to learn Mealy and Moore machines~\cite{DBLP:journals/sttt/GiantamidisTB21,DBLP:conf/cav/IsbernerHS15,DBLP:journals/isse/MuskardinAPPT22}, enabling it to model formalisms with more expressive output domains. RPNI works as follows: 
it starts by spanning all input sequences and their corresponding label into a prefix tree acceptor (PTA). That is, for each input sequence a new path is added to the PTA, such that the inputs of a sequence lead to a node that is labeled with the corresponding label. 
This tree initially captures all observed transitions without any generalization. RPNI then processes all PTA nodes in a breadth-first manner and merges two nodes if they are deemed compatible. To check state compatibility, RPNI checks whether the merger of two states remains consistent with both positive and negative examples and the resulting model remains deterministic. If these two properties are satisfied, the merger of states is deemed as valid, leading to a gradual refinement of the PTA while ensuring it accurately represents the input data.
=

Evidence-Driven State Merging (EDSM)~\cite{DBLP:conf/icgi/LangPP98} is an RPNI variation that implements a stricter state merging policy. That is, it does not greatly merge states with identical futures but implements a notion of evidence. EDSM merges only those states that display high statistical similarity (as indicated in the computed evidence score), which in turn improves the generalization capabilities of the learned model. EDSM's learning performance was demonstrated in the Abbadingo DFA learning competition, where it won first place. Note that EDSM can be used as a backend algorithm for PAPNI, as it has the same prerequisites as classic RPNI.

The $k$-tails algorithm~\cite{DBLP:journals/tc/BiermannF72} is similar to the classic RPNI approach but differs in its state merging criteria. Instead of considering the entire future behavior as RPNI does, $k$-tails merges states that exhibit identical behavior within a bounded length $k$. This makes $k$-tails particularly useful in scenarios where long-term behavior is less relevant or where computational efficiency is a priority, such as in software model inference and reverse engineering. Since $k$-tails only checks the future compatibility of states up to the bound $k$, the inferred model might display non-deterministic behavior.

So far we have considered the passive inference of deterministic models. Two notable algorithms for the passive inference of stochastic systems are  \textsc{Alergia}~\cite{DBLP:conf/icgi/CarrascoO94} and its extension \textsc{IoAlergia}~\cite{DBLP:journals/ml/MaoCJNLN16}. \textsc{Alergia} can be used to learn Markov chains, while \textsc{IoAlergia} infers Markov Decision Processes from the provided stochastic data. Unlike RPNI, these algorithms do not check the future output equivalence of states but rather their statistical similarity. That is, two states that are merged in their future/continuations display similar statistical output distributions (usually determined by a statistical compatibility check based on a Hoeffding bound~\cite{Hoeffding}).

Until this point we have discussed various passive learning approaches, none of which are able to model deterministic context-free grammars. The most notable algorithm capable of learning visibly deterministic context-free grammars is presented in the 6th chapter of Malte Isberner's PhD thesis~\cite{Malte_phdthesis}. This algorithm is implemented in LearnLib~\cite{DBLP:conf/cav/IsbernerHS15} and its variation is implemented in \textsc{AALpy}~\cite{DBLP:journals/isse/MuskardinAPPT22}. An extension of this algorithm was even used in JSON schema validation~\cite{DBLP:conf/tacas/BruyerePS23}. However, this algorithm differs from PAPNI in two important ways: (1) the presented algorithm falls into the active automata learning paradigm, and (2) it encodes the behavior of deterministic context-free grammars as 1-SEVPA~\cite{DBLP:conf/icalp/AlurKMV05}. The different formalism was chosen due to its favorable theoretical properties (in the context of active automata learning). We do not compare with their proposed algorithm, mostly due to the difference in the paradigm: in active learning, convergence to the ground truth is determined mostly by the strength of the equivalence oracle~\cite{zulu}, while in the passive learning setting, the quality of the learned model depends largely on the provided samples.

\newpage
\section{Preliminaries}\label{sec:prelim}

We start by introducing deterministic finite automata (DFA), as they serve as an underlying formalism in the PAPNI's backend. We then introduce visibly deterministic pushdown automata (VDPA)~\cite{DBLP:conf/stoc/AlurM04}, which can be used to model deterministic context-free grammars. The term ``visibly'' in visibly deterministic pushdown automata refers to the fact that the input alphabet of such models is divided into distinct categories: call symbols, return symbols, and internal symbols. In the remainder of this paper, we use the terms ``call'' and ``push'' and ``return'' and ``pop'' interchangeably, since call symbols push to the stack, and return symbols pop from the stack.

\begin{definition}[Deterministic Finite Automata (DFA)]
A DFA is a 5-tuple $\langle Q, I, \delta, q_0, F \rangle$, where:

\begin{itemize}
    \item $Q$ is a finite set of states,
    \item $I$ is a finite set of input symbols (input alphabet),
    \item $\delta: Q \times I \rightarrow Q$ is the transition function,
    \item $q_0 \in Q$ is the initial state, and
    \item $F \subseteq Q$ is the set of accepting (final) states.
\end{itemize}

\end{definition}

DFAs are a class of formal models that can be described by regular expressions. These languages are called regular languages and are found at the lowest level of the Chomsky hierarchy~\cite{DBLP:books/daglib/0016921}. 

A word (sequence of inputs) is processed by a DFA as follows. The automaton begins in the initial state $q_0$ and processes an input word one symbol at a time. At each step, the transition function $\delta$ determines the next state based on the current state and the current input symbol. This process continues until all symbols in the word have been read. If the automaton ends in a state that belongs to $F$, the word is accepted; otherwise, it is rejected. For a formal definition of various DFA properties, we point the reader to~\cite{DBLP:books/daglib/0016921}.  

\begin{definition}[Visibly Deterministic Pushdown Automaton (VDPA)]
A \emph{visibly deterministic pushdown automaton} is a tuple
\[
M = (Q, \Sigma, \Gamma, \delta, q_0, Z_0, F)
\]
where:
\begin{itemize}
    \item $Q$ is a finite set of states,
    \item $\Sigma$ is a finite input alphabet partitioned into three disjoint sets:
    \[
    \Sigma = \Sigma_{\mathit{call}} \ \dot{\cup} \ \Sigma_{\mathit{return}} \ \dot{\cup} \ \Sigma_{\mathit{internal}},
    \]
    where:
    \begin{itemize}
        \item $\Sigma_{\mathit{call}}$: symbols that \emph{push} onto the stack,
        \item $\Sigma_{\mathit{return}}$: symbols that \emph{pop} from the stack,
        \item $\Sigma_{\mathit{internal}}$: symbols that \emph{do not modify} the stack.
    \end{itemize}
    \item $\Gamma$ is a finite stack alphabet, and in the context of our paper $\Gamma = \Sigma_{\mathit{call}} \cup \{Z_0\}$, where $Z_0$ is a symbol denoting the empty stack,
    \item $\delta$ is a \emph{deterministic} transition function defined as the disjoint union of three partial functions:
    \begin{align*}
        \delta_{\mathit{call}} & : Q \times \Sigma_{\mathit{call}} \times \Gamma \to Q \times \Gamma \\
        \delta_{\mathit{return}} & : Q \times \Sigma_{\mathit{return}} \times \Gamma \to Q \times \Gamma^* \\
        \delta_{\mathit{internal}} & : Q \times \Sigma_{\mathit{internal}} \times \Gamma \to Q \times \Gamma
    \end{align*}
    such that:
    \begin{itemize}
        \item If $a \in \Sigma_{\mathit{call}}$, a symbol is pushed onto the stack
        \item If $a \in \Sigma_{\mathit{return}}$, then the top-of-stack symbol is removed,
        \item If $a \in \Sigma_{\mathit{internal}}$, then the stack is unchanged.
    \end{itemize}
    \item $q_0 \in Q$ is the initial state,
    \item $F \subseteq Q$ is the set of accepting states.
\end{itemize}
\end{definition}
A word $w \in \Sigma^*$ is \emph{accepted} by $M$ if, starting in $(q_0, Z_0)$ and processing $w$ according to $\delta$, the automaton ends in a configuration $(q, \epsilon)$ where $q \in F$ and the stack is empty. More general definitions of VDPA allow a non-empty stack for non-acceptance, but we restrict their acceptance to an empty stack, in the same fashion as~\cite{Malte_phdthesis}.
Note that $\Sigma_\textit{internal}$ might be the empty set. However, if $\Sigma_\textit{call}$ and $\Sigma_\textit{return}$ are empty, then the VDPA does not contain any stack-dependent behavior and can be modeled with a DFA.

We employ a simplified version of VDPAs where the stack alphabet is equal to the call alphabet, meaning that each call symbol simply pushes itself onto the stack rather than arbitrary stack symbols. This restriction does not affect the expressiveness of VDPAs since these simplified VDPAs still maintain a visible deterministic relationship between input symbols and stack operations, and since the nested structure of calls and returns remains intact, any additional information that might have been encoded through different stack symbols can instead be captured in the state space, this simplified model maintains the same computational power.

In VDPAs, the type of stack operation (push, pop, or no change) is entirely determined by the current input symbol. As the input symbol class dictates the stack behavior, the control flow of the VDPA is easier to analyze than that of a general PDA~\cite{DBLP:conf/stoc/AlurM04,DBLP:conf/icalp/AlurKMV05}.

While VDPAs are more powerful than DFAs and can recognize certain non-regular languages, they are not as expressive as general non-deterministic pushdown automata, and therefore model a strict subset of all context-free languages. For example, deterministic PDAs can recognize the language of balanced parentheses, as the stack provides a natural way to track nested structures. However, they cannot recognize all context-free languages. For example, the language of palindromes requires comparing symbols from both ends of the input, since single-stack memory cannot efficiently handle it in a deterministic manner. In addition, languages like $\{a^n b a^n \mid n \in \mathbb{N}\}$ cannot be modeled with VPDS, since $a$ would appear both in the call and return set, and by defintion, $\Sigma_{\textit{call}} \cap \Sigma_{\textit{return}} = \emptyset$.

\begin{definition}[Well-matched input sequence]
An input sequence is well-matched if it can be fully processed by a VDPA such that every call symbol in $\Sigma_\textit{call}$ has a matching return symbol in $\Sigma_\textit{return}$, and the stack is empty at the end of the sequence. An algorithm that checks whether the input sequence is well-matched can be seen in Algorithm~\ref{alg:is_balanced}.
\end{definition}

\noindent
\textbf{VDPA Example.} Figure~\ref{fig:running_example} depicts a PDA that accepts simple arithmetic expressions. Its input alphabet is partitioned as follows: $\Sigma_\textit{internal} \gets \{~1, +~\}$, $\Sigma_\textit{call} \gets \{~(~\}$, and $\Sigma_\textit{return} \gets \{~)~\}$. 

Examples of words accepted by the VDPA shown in Figure~\ref{fig:running_example} are $\{ 1, (1), 1+(1), (1) + ((1)) \}$, while words  $\{ (), )(, (1) + (), (()) \}$ are rejected by this language. Examples of input sequences that are not well matched are $)($ and $(()))$. In the first sequence $)($, the return symbol $)$ has no matching $\textit{call}$ symbol on the stack at the time of processing, that is, it popped from an empty stack, which results in an error. Likewise, when processing the sequence $(()))$, the final symbol is encountered while the stack is empty; as defined earlier, popping from an empty stack is disallowed.

\begin{figure}[t]
    \centering
    \begin{tikzpicture}[shorten >=1pt, node distance=3cm, on grid, auto] 
       \node[state, initial] (q_0)   {$s_0$}; 
       \node[state, accepting] (q_1) [right=of q_0] {$s_1$}; 
    
        \path[->] 
        (q_0) edge [loop above] node {( / push(``('')} ()
              edge  node {1} (q_1)
        (q_1) edge [loop above] node {) / pop(``('')} ()
              edge [bend left, below] node {+} (q_0);
    \end{tikzpicture}
    \caption{PDA accepting simple arithmetic expressions.}
    \label{fig:running_example}
\end{figure}
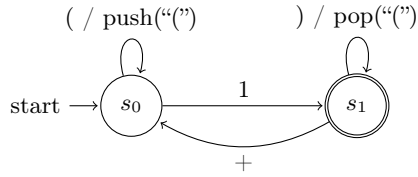

\section{Method}\label{sec:method}

PAPNI is implemented as a preprocessing step to classic RPNI implementations. As such, PAPNI can also trivially be adapted to advanced RPNI variations, such as EDSM.
PAPNI assumes prior knowledge of the decomposition of the input alphabet into push, pop, and internal symbols, and works as follows: (1) PAPNI filters out non-well-matched input sequences, and (2) PAPNI transforms well-matched input sequences to the stack-aware representation.

\subsection{Filtering of the non-well-matched sequences}
In the first step, PAPNI removes all non-well-matched input sequences from the dataset $D$. To check whether the input sequence is well-matched, we use Algorithm~\ref{alg:is_balanced}. 
The intuition behind this step is that a non-erroneous sequence generated by a VDPA must be, by definition, well-matched. Sequences that result in a non-empty stack or pop from the empty stack are not in the language of the underlying VDPA.

More specifically, these non-well-matched sequences fall into two categories: sequences with unmatched call symbols (leaving items on the stack) and sequences with unmatched return symbols (attempting to pop from an empty stack). In either case, the sequences violate the stack rules that VDPAs impose.
Including such sequences in the learning set would result in a model that would (erroneously) be able to produce sequences that are not in the language of the ground truth VDPA.  

%%%%%%%%%%%%%%%%%%%%%%%%%%%%%%%%%%%%%%%%%%%%%%%%%%%%%%%%%%%%%%%%%%%%%%%%%%%%%%%%%%%%%
\begin{algorithm}[t]
\caption{Function \textit{isWellMatched}}
\label{alg:is_balanced}
\begin{algorithmic}[1]
\State \textbf{Input:} $inputSeq$, $vdpaAlphabet$
\State \textbf{Output:} \textbf{True} if the sequence is well-matched, \textbf{False} otherwise
\State $counter \gets 0$
\For{$i$ in $inputSeq$}
    \If{$i \in vdpaAlphabet.callAlphabet$}
        \State $counter \gets counter + 1$
    \EndIf
    \If{$i \in vdpaAlphabet.returnAlphabet$}
        \State $counter \gets counter - 1$
    \EndIf
    \If{$counter < 0$}
        \State \Return False
    \EndIf
\EndFor
\State \Return $counter == 0$
\end{algorithmic}
\end{algorithm}
%%%%%%%%%%%%%%%%%%%%%%%%%%%%%%%%%%%%%%%%%%%%%%%%%%%%%%%%%%%%%%%%%%%%%%%%%%%%%%%%%%%%%

%%%%%%%%%%%%%%%%%%%%%%%%%%%%%%%%%%%%%%%%%%%%%%%%%%%%%%%%%%%%%%%%%%%%%%%%%%%%%%%%%%%%%
\begin{algorithm}[t]
\caption{Convert an input sequence to its stack-aware representation}
\label{alg:convert_input_seq}
\begin{algorithmic}[1]
\State \textbf{Input:} $inputSeq$, $vdpaAlphabet$
\State \textbf{Output:} $papniSequence$
\State $papniSequence \gets \{\}$
\State $stack \gets \{\}$

\For{$inputSymbol$ \textbf{in} $inputSeq$}
    \State $inputElement \gets inputSymbol$
    
    \If{$inputSymbol \in vdpaAlphabet.\textit{callAlphabet}$}
        \State $stack \gets stack \cup \{inputSymbol\}$
    \EndIf

    \If{$inputSymbol \in vdpaAlphabet.\textit{}{returnAlphabet}$}
        \State $\textit{currentTopOfStack} \gets stack.pop()$
        \State $inputElement \gets (inputSymbol, \textit{currentTopOfStack})$
    \EndIf
    
    \State $papniSequence.append(inputElement)$
\EndFor

\State \Return $papniSequence$
\end{algorithmic}
\end{algorithm}
%%%%%%%%%%%%%%%%%%%%%%%%%%%%%%%%%%%%%%%%%%%%%%%%%%%%%%%%%%%%%%%%%%%%%%%%%%%%%%%%%%%%%

\subsection{Transformation of input-sequences}
Well-matched input sequences are converted to the intermediate stack-aware representation, following the algorithm outlined in the Algorithm~\ref{alg:convert_input_seq}. The transformation works as follows: the algorithm processes the input sequence by pushing call symbols onto a stack and, when encountering a return symbol, pops the last call symbol from the top of the stack and pairs them together. This results in a sequence where internal and call elements appear normally, and return symbols are annotated with the corresponding call symbol they matched (which symbol was at the top of the stack when the return symbol is encountered), providing a stack-aware representation of the input.

Alur and Madhusudan~\cite{DBLP:conf/stoc/AlurM04} showed that VDPAs are, just as regular languages, closed under union, intersection, renaming, and completion. All these properties arise from the fact that the stack state is knowable and that the stack history is uniquely determined by the input sequence (which is not the case for general PDAs). This property creates a fundamental connection between the input string and stack operations, enabling us to encode stack information directly in an input string with the help of a stack-aware alphabet.

This transformation effectively "flattens" the context-free structure of visibly pushdown languages into a regular language over the stack-aware alphabet. The key insight is that once we encode which call symbol each return symbol is paired with, we no longer need an explicit stack to recognize the language. All necessary stack information is embedded directly in the transformed input.

The correctness of this transformation is derived from the correspondence between well-matched sequences and their stack histories in VDPAs. Since the stack behavior is completely determined by the input symbols, encoding this behavior directly in the input preserves all information necessary to express the behavior of the underlying VDPA. More details about correctness are outlined in the Section~\ref{sec:proof}.

\subsection{Invocation of RPNI}
Preprocessed well-matched sequences are then passed to the RPNI implementation. Execution of the RPNI algorithm over the stack-aware alphabet will return a DFA (since RPNI is used to learn regular languages) that conforms to the preprocessed data. Since the preprocessed data includes stack information, the transitions of the learned DFA encode the push and pop information used by VDPAs. 
From these steps, it can be seen that PAPNI reduces the problem of VDPA learning to the problem of DFA learning over an extended input alphabet, or as we call it in this paper, the stack-aware alphabet.

% dataset = [
%   ((), False),
%    ('()', True),
%    ('(())', True),
    % ('()()', False),
    %('()()()', False),
    %('()(())', False),
    %('(', False),
    %('())', False),
    %(')(', False),
    %('(()', False),
    %('(()))(', False)

\noindent
\textbf{Implementation.}
An implementation of PAPNI can be found in \textsc{AALpy} repository~\footnote{\url{https://github.com/DES-Lab/AALpy/blob/master/aalpy/learning_algs/deterministic_passive/RPNI.py}}. As explained in this section, the implementation consists of filtering non-well-matched input sequences and converting input sequences to their ``stack-aware'' representation. The learned model is then converted to the VDPA class (since the RPNI learns a DFA). Note that in this conversion, we only change the class of the resulting model; that is, states and transitions between states are not changed. The only addition is the implementation of the stack in the VDPA class.
Both classic RPNI and EDSM are available as backend learning algorithms, and PAPNI can easily be added as a preprocessing step to other deterministic passive learning algorithms implemented in \textsc{AALpy}'s GSM framework~\cite{DBLP:conf/cav/BergA25}.

\begin{figure}[t]
    \centering
    \begin{tikzpicture}[shorten >=1pt, node distance=3cm, on grid, auto] 
       \node[state, initial] (q_0)   {$s_0$}; 
       \node[state, accepting] (q_1) [right=of q_0] {$s_1$};
       \node[state] (q_2) [right=of q_1] {$s_2$}; 
    
        \path[->] 
        (q_0) edge [loop above] node {( / push(``('')} ()
              edge  node {) / pop(``('')} (q_1)
        (q_1) edge [loop above] node {) / pop(``('')} ()
              edge  node {( / push(``('')} (q_2)
        (q_2) edge [loop above] node {( / push(``('')} ()
              edge [loop below] node {) / pop(``('')} ()
              ;
    \end{tikzpicture}
    \caption{VDPA $E$ accepting balanced parentheses. Note that state $s_2$ is the sink state, which discards all future sequences.}
    \label{fig:example_vdpa}
\end{figure}
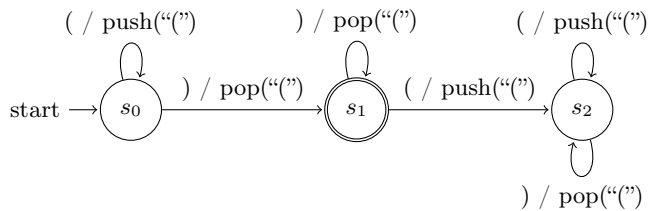

\begin{table}[t]
\centering
\begin{tabular}{|l|l|c|}
\hline
RPNI Sequences   & PAPNI Sequences                         & \multicolumn{1}{l|}{Sequence Label} \\ \hline
( , )            & (, ) pop (                              & True                                \\ \hline
(, (, ), )       & (, (, ) pop (, ) pop (                  & True                                \\ \hline
(, ), (, )       & (, ) pop (,  (, ) pop (                 & False                               \\ \hline
(, ), (, ), (, ) & (, ) pop (, (, ) pop (, (, ) pop (      & False                               \\ \hline
(, ), (, (, ), ) & (, ) pop (, ( , (, ) pop ( , ) pop (    & False                               \\ \hline
(                & -                                       & False                               \\ \hline
(, ), )          & -                                       & False                               \\ \hline
),(              & -                                       & False                               \\ \hline
(, (, )          & -                                       & False                               \\ \hline
(, (, ), ), ), ( & -                                       & False                               \\ \hline
\end{tabular}
\vspace{0.2cm}
\caption{An example of the RPNI dataset and corresponding PAPNI dataset for the simple variation of balanced parentheses. \textit{x pops y} denotes that return symbol $x$ pops call symbol $y$ from the top of the stack. Dash depicts a filtered non-well-matched sequence. The sequences conform to the automaton in Figure~\ref{fig:example_vdpa}.}
\label{tab:rpni_papni_sequences}
\end{table}

\begin{figure}[t]
    \centering
    \includegraphics[width=0.7\linewidth]{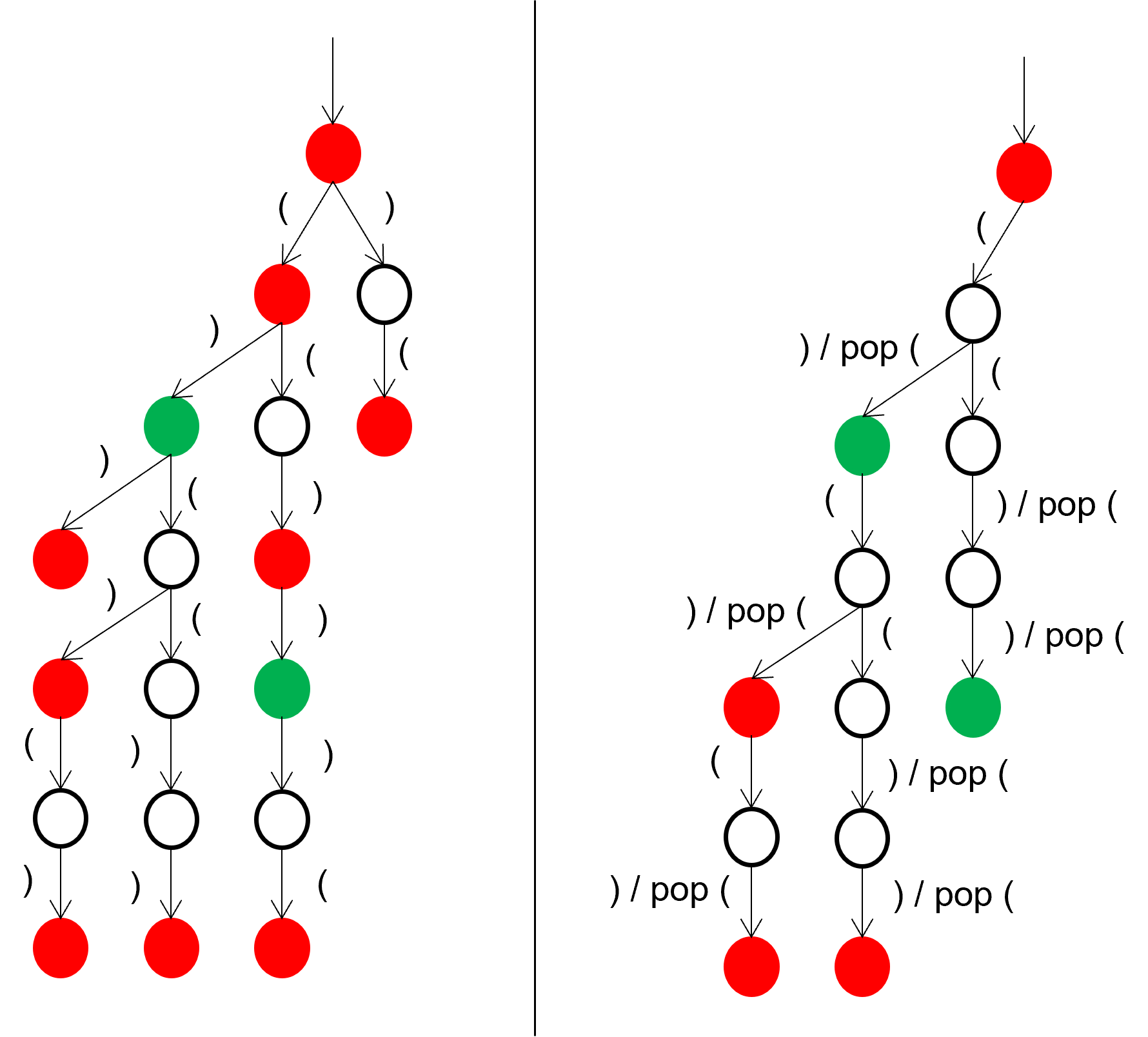}
    \caption{Prefix-tree acceptors (PTA) of RPNI (left) and PAPNI (right) with sequences from Table~\ref{tab:rpni_papni_sequences}. Green states are accepting, red states are rejecting, and the behavior of white states is not yet known. The PTA for PAPNI is noticeably smaller because it is built upon the sequences remaining after filtering.}
    \label{fig:ptas}
\end{figure}

\begin{figure}[t]
    \centering
    \begin{tikzpicture}[shorten >=1pt, node distance=3cm, on grid, auto] 
       \node[state, initial] (q_0)   {$s_0$}; 
       \node[state, accepting] (q_1) [right=of q_0] {$s_1$};
       \node[state] (q_2) [right=of q_1] {$s_2$}; 
    
        \path[->] 
        (q_0) edge [loop above] node {(} ()
              edge  node {) / pop (} (q_1)
        (q_1) edge [loop above] node {) / pop (} ()
              edge  node {(} (q_2)
        (q_2) edge [loop above] node {(} ()
              edge [loop below] node {) / pop (} ()
              ;
    \end{tikzpicture}
    \caption{DFA $E'$ constructed by RPNI.}
    \label{fig:example_rpni_result}
\end{figure}
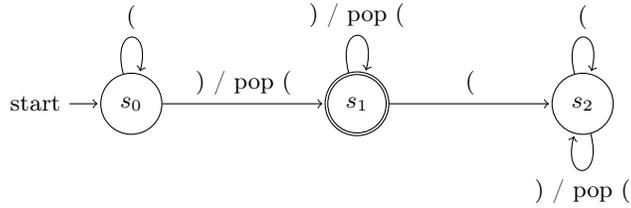

\begin{figure}[t]
    \centering
    \includegraphics[width=0.6\linewidth]{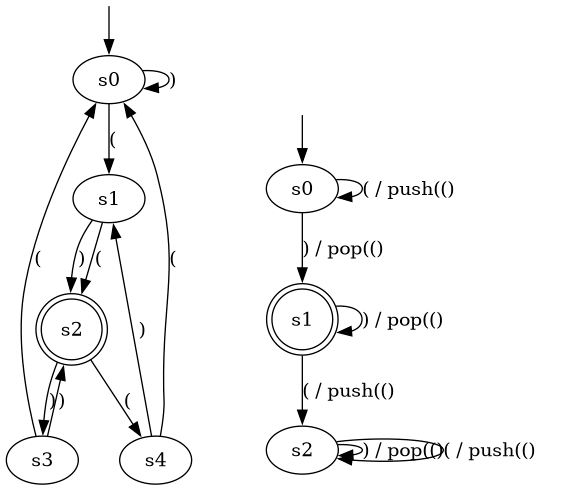}
    \caption{Models learned by RPNI (left) and PAPNI (right) with sequences from Table~\ref{tab:rpni_papni_sequences}. PAPNI's model is accurate, whereas the one by RPNI incorrectly accepts, for instance, word $)()$.}
    \label{fig:rpni_papni_models}
\end{figure}

\subsection{Example}
To illustrate the inner workings of PAPNI, we will now walk through a small example. To this end, we will attempt to infer the VDPA $E$ shown in Figure~\ref{fig:example_vdpa} using the sequences shown in Table~\ref{tab:rpni_papni_sequences}. We know that $\Sigma_{call} = \{ ~( ~\}, \Sigma_{interal} = \emptyset, \Sigma_{return} = \{ ~) ~\}$. The target language $L(E)$ is the language of balanced parentheses. The code used to produce this example is available in Appendix~\ref{apx:code}.

In the first step, PAPNI filters the given sequences to remove non-well-matched ones using Algorithm~\ref{alg:is_balanced}, which is denoted with `$-$' in Table~\ref{tab:rpni_papni_sequences}. Next, the remaining sequences are transformed into their stack-aware representation using Algorithm~\ref{alg:convert_input_seq}. This leads to the state displayed in the second column of Table~\ref{tab:rpni_papni_sequences}.

After processing the input sequences, PAPNI invokes RPNI, which builds a prefix tree acceptor (PTA)~\cite{10.5555/1830440} on the preprocessed dataset. The constructed PTA is shown on the right-hand side of Figure~\ref{fig:ptas}. If we were to use classic RPNI without data preprocessing, the PTA shown on the left-hand side of Figure~\ref{fig:ptas} would be constructed. 

The algorithm (RPNI, invoked by PAPNI) continues on the stack-aware PTA, systematically attempting to merge states if they are compatible (i.e., the merge would not lead to accepting a sequence from the data with label "False"). For more details on the inner workings of RPNI, we refer the reader to the work of De La Higuera~\cite{10.5555/1830440}. 
Eventually, the algorithm will return a DFA $E'$ over the stack-aware alphabet, which is shown in Figure~\ref{fig:example_rpni_result}. 
%Note that $E'$ may incorrectly accept non-well-matched sequences, such as ()), but is accurate w.r.t. well-matched sequences.

In the last step, PAPNI converts the DFA $E'$ to a VPDA by introducing the stack. This way, we avoid incorrectly accepting non-well-matched sequences. On the right-hand side of Figure~\ref{fig:rpni_papni_models}, we see the final model returned by PAPNI. This learned model conforms to the ground truth model used for data generation (shown in Figure~\ref{fig:example_vdpa}). 

If we were to use classic RPNI instead of PAPNI, the learning algorithm would return an incorrect model, as shown on the left-hand side of Figure~\ref{fig:rpni_papni_models}. This model incorrectly accepts, for instance, the word $)()$, which is not in the language of balanced parentheses.

\newpage

\section{On the Correctness of PAPNI}\label{sec:proof}

Before presenting the proof, we note an important practical limitation. 
Unlike standard RPNI, which learns over the given input alphabet \(\Sigma\), 
PAPNI operates over the stack-aware alphabet \(\Sigma'\). 
This alphabet is strictly larger, since for every return symbol \(r \in \Sigma_{\mathit{return}}\) 
and every call symbol \(c \in \Sigma_{\mathit{call}}\), PAPNI introduces a paired symbol \((r,c)\). 
Hence
\[
  |\Sigma'| \;=\; |\Sigma_{\mathit{internal}}|
                 + |\Sigma_{\mathit{call}}|
                 + |\Sigma_{\mathit{return}}|\cdot|\Sigma_{\mathit{call}}|,
\]
which can be substantially larger than \(|\Sigma|\). 
As a consequence, the training set \(D\) must cover a much richer alphabet, 
and in practice PAPNI may require considerably more examples than RPNI 
to uniquely identify the target language. 
This does not affect the theoretical correctness guarantees, 
but it influences sample complexity in practice.

\begin{theorem}[PAPNI Correctness]
\label{thm:papni-correctness}
Given a training set \(D\) of positive and negative examples from a visibly deterministic context‑free language \(L\), PAPNI learns a VDPA \(M\) such that \(L(M) = L\) if \(D\) contains sufficient examples to uniquely identify \(L\).
\end{theorem}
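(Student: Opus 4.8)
The plan is to factor PAPNI's correctness through the standard identification-in-the-limit guarantee of RPNI, using the stack-aware transformation of Algorithm~\ref{alg:convert_input_seq} as a faithful bijection between the visibly pushdown setting and a regular one. Write $\tau$ for this transformation, mapping a well-matched word $w \in \Sigma^*$ to its stack-aware encoding $\tau(w) \in \Sigma'^*$, and let $M_L = (Q, \Sigma, \Gamma, \delta, q_0, Z_0, F)$ be a VDPA recognizing the target language $L$. The first and central step is an \emph{encoding lemma}: $\tau$ is injective on well-matched words (its inverse simply erases the annotation on each return symbol), and there is a DFA $A_L$ over $\Sigma'$, obtained from $M_L$ by keeping $Q$, $q_0$, and $F$ unchanged and setting $\delta_{A_L}(q,(r,c)) = $ the state component of $\delta_{\mathit{return}}(q,r,c)$ while copying the call and internal transitions, such that for every well-matched $w$ we have $w \in L \iff \tau(w) \in L(A_L)$. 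The proof is a straightforward induction on $|w|$: because $M_L$ is visibly deterministic, the stack contents -- and hence the top-of-stack symbol seen at each return -- are uniquely determined by the input prefix (this is exactly the determinacy property underlying the closure results of Alur and Madhusudan~\cite{DBLP:conf/stoc/AlurM04}), so the state trajectory of $A_L$ on $\tau(w)$ coincides with that of $M_L$ on $w$; since $w$ is well-matched the stack is empty at the end, so the empty-stack acceptance condition of $M_L$ reduces to the plain final-state condition of $A_L$.

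Second, I would establish \emph{filtering soundness}: by the VDPA acceptance condition every $w \in L$ is well-matched, so Algorithm~\ref{alg:is_balanced} discards only words that are guaranteed to lie outside $L$, and the surviving, transformed sample is precisely a set of correctly labelled valid encodings $\{(\tau(w), \ell) : w \text{ well-matched}\}$ with $\ell = \mathbf{True} \iff w \in L$. Third, I would invoke the correctness of RPNI directly: interpreting the hypothesis ``$D$ contains sufficient examples to uniquely identify $L$'' as the statement that the transformed sample contains a characteristic sample for the regular language $L(A_L)$, the identification-in-the-limit theorem for RPNI (De~La~Higuera~\cite{10.5555/1830440}) yields a DFA $A$ that agrees with $A_L$ on every valid encoding. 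Finally, a short \emph{back-conversion lemma} shows that re-attaching the stack to $A$ (the construction described in Section~\ref{sec:method}, which leaves states and transitions untouched) produces a VDPA $M$ that, by the same visible-determinacy argument as in the encoding lemma, accepts $w$ exactly when $w$ is well-matched and $\tau(w) \in L(A)$. Chaining these facts gives $L(M) = \{\, w \text{ well-matched} : \tau(w) \in L(A)\,\} = \{\, w \text{ well-matched} : w \in L \,\} = L$, as required.

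The step I expect to be the main obstacle is reconciling RPNI's guarantee, which concerns a regular language over the \emph{entire} extended alphabet $\Sigma'^*$, with the fact that the training data -- and the back-converted VDPA -- only ever inhabit the fragment of \emph{valid} encodings, which for $|\Sigma_{\mathit{call}}| > 1$ is itself not regular (checking that each return's annotation matches its innermost open call requires a stack). The resolution I would argue is that this fragment is irrelevant to the conclusion: since the stack reconstructed during back-conversion only follows genuine call/return nesting, $M$ never consults $A$ on an invalid encoding, so it suffices for RPNI to recover a DFA agreeing with $A_L$ on valid encodings rather than one equal to $A_L$ on all of $\Sigma'^*$. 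Making this precise requires care in stating the characteristic-sample hypothesis so that it can be met using only valid encodings (the only words PAPNI can ever produce), and in confirming that distinguishability of the states reachable by valid encodings is enough; this is exactly where the informal phrase ``sufficient examples to uniquely identify $L$'' must be pinned down, and I would devote the bulk of the rigorous argument to it.
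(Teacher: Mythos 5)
Your proposal is correct and follows essentially the same route as the paper: your filtering-soundness step is Lemma~\ref{lem:wellmatched-filtering}, your encoding lemma combines Lemmas~\ref{lem:transformation-preserves} and~\ref{lem:transformation-injective}, your construction of \(A_L\) and the back-conversion are the two directions of Lemma~\ref{lem:vdpa-dfa-correspondence}, and your appeal to identification in the limit is Lemma~\ref{lem:rpni-correctness}; soundness and completeness are then chained exactly as in the paper. Where you genuinely go beyond the paper is your closing discussion. The paper states Lemma~\ref{lem:vdpa-dfa-correspondence} as a ``bijection'' between VDPAs over \(\Sigma\) and DFAs over \(\Sigma'\), which is an overstatement (two DFAs differing only on invalid encodings yield the same VDPA language), and it disposes of the non-well-matched inputs with a single informal remark that the reattached stack rejects them. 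You instead isolate the real crux: for \(|\Sigma_{\mathit{call}}|>1\) the set of valid encodings is not regular, the transformed sample can contain only valid encodings, and hence the characteristic-sample hypothesis behind RPNI's guarantee must be restated as agreement with \(A_L\) on valid encodings rather than language equality over all of \(\Sigma'^*\) --- justified, as you argue, because the back-converted VDPA only ever consults the DFA along stack-consistent paths. This is sharper than anything in the paper, where the same difficulty is absorbed wholesale into the informal hypothesis ``sufficient examples to uniquely identify \(L\)''. Two minor remarks: the paper's proof also includes a (trivial) termination component, Theorem~\ref{thm:termination}, which your plan omits; and your state-preserving construction of \(A_L\) (like the paper's own forward construction) implicitly assumes that call and internal transitions do not depend on the stack top, which holds for standard visibly pushdown automata but is not literally guaranteed by the paper's signatures \(\delta_{\mathit{call}}, \delta_{\mathit{internal}} : Q \times \Sigma \times \Gamma \to Q \times \Gamma\), so it deserves an explicit assumption if you write the induction out in full.
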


The proof of Theorem~\ref{thm:papni-correctness} consists of three main parts:
\begin{enumerate}
    \item \textit{Soundness}: Every sequence accepted by the learned model is in the target language.
    \item \textit{Completeness}: Every sequence in the target language is accepted by the learned model.
    \item \textit{Termination}: The algorithm terminates and produces a finite automaton.
\end{enumerate}

\subsection{Proof of Soundness}

\begin{lemma}[Well‑matched Filtering Preserves Language Membership]
\label{lem:wellmatched-filtering}
If \(w \in L(M)\) for some VDPA \(M\), then \(w\) is well‑matched.
\end{lemma}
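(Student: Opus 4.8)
The plan is to prove the contrapositive together with a direct unfolding of the VDPA acceptance condition. Recall that a VDPA $M$ accepts $w$ only if, starting from configuration $(q_0, Z_0)$ and processing $w$ according to $\delta$, it reaches a configuration $(q, \epsilon)$ with $q \in F$ and an \emph{empty} stack, and crucially that no intermediate step attempts to pop from the empty stack (such a pop is undefined and aborts processing). So I would show that acceptance forces exactly the two stack invariants that \textit{isWellMatched} (Algorithm~\ref{alg:is_balanced}) checks, namely that the running call/return counter never goes negative and equals zero at the end.

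First I would make precise the connection between the abstract stack of $M$ and the integer $counter$ maintained by Algorithm~\ref{alg:is_balanced}. The key observation is that, in our simplified VDPAs, every symbol in $\Sigma_{\mathit{call}}$ pushes exactly one symbol, every symbol in $\Sigma_{\mathit{return}}$ pops exactly one symbol, and every symbol in $\Sigma_{\mathit{internal}}$ leaves the stack height unchanged. Hence after processing any prefix $u$ of $w$, the height of $M$'s stack (measured as the number of symbols above $Z_0$) equals precisely the value of $counter$ after Algorithm~\ref{alg:is_balanced} has read $u$. I would establish this by a straightforward induction on the length of the prefix: the base case is the empty prefix with both quantities zero, and each of the three input-symbol classes preserves the invariant in the inductive step.

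With that correspondence in hand, the two well-matchedness conditions follow from acceptance. If $w \in L(M)$, then processing never pops from the empty stack, so the stack height stays nonnegative throughout; by the invariant $counter \geq 0$ after every prefix, so Algorithm~\ref{alg:is_balanced} never returns \textbf{False} on line~10. Moreover acceptance requires the final stack to be empty, i.e.\ height zero, so the invariant gives $counter = 0$ at the end, and Algorithm~\ref{alg:is_balanced} returns \textbf{True} on line~13. Therefore $w$ is well-matched, which is exactly the claim.

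I expect the main obstacle to be purely definitional rather than technical: the excerpt's definition of acceptance speaks of the final configuration $(q,\epsilon)$ but is somewhat informal about what happens when $\delta_{\mathit{return}}$ is applied to an empty stack. The cleanest route is to treat an attempted pop from $Z_0$ (the empty-stack marker) as leaving $\delta$ undefined, so that such a run simply does not accept; I would state this reading explicitly at the start of the proof so that the ``counter never negative'' direction is rigorous. The remaining work—the height/counter induction—is routine once the three transition classes are tabulated, and no case analysis beyond those three symbol types is needed.
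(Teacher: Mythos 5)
Your proof is correct and takes essentially the same approach as the paper's: both hinge on the observation that acceptance requires an empty final stack and forbids popping from an empty stack, which are precisely the two conditions checked by Algorithm~\ref{alg:is_balanced}. The paper argues this contrapositively and quite informally (two cases of non-well-matchedness, each blocking acceptance), whereas you argue directly and add the explicit stack-height/counter invariant by induction plus the undefined-pop convention --- strictly more rigorous, but the same underlying idea.
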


\begin{proof}
By definition, a VDPA accepts a word \(w\) if and only if processing \(w\) leads to an accepting state with an empty stack. If \(w\) is not well‑matched, then either:
\begin{itemize}
    \item \(w\) contains unmatched call symbols \(\Rightarrow\) stack is non‑empty at end \(\Rightarrow\) \(w \notin L(M)\).
    \item \(w\) contains unmatched return symbols \(\Rightarrow\) pop from empty stack \(\Rightarrow\) error state \(\Rightarrow\) \(w \notin L(M)\).
\end{itemize}
Therefore, \(w \in L(M)\) implies \(w\) is well‑matched.
\end{proof}

\begin{lemma}[Stack‑aware Transformation Preserves Semantics]
\label{lem:transformation-preserves}
For any well‑matched sequence \(w\), the stack‑aware transformation \(T(w)\) uniquely encodes the stack operations performed during the execution of \(w\) on the VDPA.
\end{lemma}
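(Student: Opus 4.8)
The plan is to prove the lemma by induction on the prefixes of \(w\), maintaining the invariant that the auxiliary stack built by Algorithm~\ref{alg:convert_input_seq} is, at every step, identical to the stack configuration that the VDPA \(M\) reaches while processing the same prefix. The simplified model we adopt, where \(\Gamma = \Sigma_{\mathit{call}} \cup \{Z_0\}\) and every call symbol pushes itself, makes this correspondence especially direct: there is no freedom in the choice of pushed symbol, so the stack content is a verbatim record of the unmatched call symbols seen so far, in order. I would first fix this invariant and verify the base case, where the empty prefix leaves both stacks equal to the initial empty stack.

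For the inductive step, I would split on the class of the next input symbol, exploiting the defining property of VDPAs that the symbol class alone determines the type of stack operation. For an internal symbol the algorithm emits the symbol unchanged and neither stack moves; for a call symbol both the algorithm and \(M\) push the same symbol; for a return symbol both pop, and the algorithm records the popped top-of-stack in the annotation \((r,c)\). The crucial point to establish here is that the pop is always legal, i.e.\ the stack is nonempty whenever a return symbol is read. This is exactly where Lemma~\ref{lem:wellmatched-filtering} and the guarantee produced by Algorithm~\ref{alg:is_balanced} enter: because every call symbol increments and every return symbol decrements the counter by one irrespective of which concrete symbol is read, the counter value equals the stack depth at every position. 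Hence the counter staying nonnegative and ending at zero is precisely the statement that no pop ever underflows and the stack is empty at the end, so each annotation \((r,c)\) is well defined and names the call symbol that \(M\) would match with that return.

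Having established the invariant, I would then argue the two halves of ``uniquely encodes''. \emph{Faithfulness}: by the invariant, at each return position the pair \((r,c)\) names exactly the call symbol consumed by \(M\) at that step, while call and internal positions carry an implicit push or no-op determined by their class; thus the entire sequence of stack operations performed by \(M\) is a function of \(T(w)\) and can be read off position by position. \emph{Injectivity}: \(T\) alters only return symbols, replacing \(r\) by \((r,c)\), so projecting each pair onto its first component recovers \(w\) from \(T(w)\); hence distinct well-matched sequences yield distinct transformations and no stack information is lost.

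I expect the main obstacle to be the reconciliation of the two distinct notions of well-matchedness used in the pipeline: the counter-based test of Algorithm~\ref{alg:is_balanced}, which tracks only stack depth, versus the symbol-level matching carried out by Algorithm~\ref{alg:convert_input_seq}. With several call and return symbols these are a priori different, since the counter test does not record which call a given return pairs with. The argument must make explicit that depth is the only quantity on which the legality of a pop depends, so that passing the counter test already guarantees that Algorithm~\ref{alg:convert_input_seq} never pops an empty stack and terminates with an empty stack, even though the specific \((r,c)\) pairings are revealed only by the stack itself. Once this equivalence at the level of stack depth is pinned down, the remaining steps are a routine structural induction.
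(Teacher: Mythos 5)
Your proposal is correct and follows the same underlying route as the paper's proof: both rest on the visibly-pushdown property that the stack history is a deterministic function of the input prefix, so that annotating each return with its matched call yields an invertible encoding of all stack operations. The difference is one of rigor rather than of strategy. The paper's proof simply \emph{asserts} that ``the stack content at each step is uniquely determined by the input prefix'' and that the encoding is bijective; you actually prove both claims, first via an explicit invariant induction identifying the auxiliary stack of Algorithm~\ref{alg:convert_input_seq} with the configuration reached by the VDPA on the same prefix, and then via the projection argument for injectivity (the latter in fact anticipates the paper's separate Lemma~\ref{lem:transformation-injective}, which it proves only later, in the completeness part). Your resolution of the counter-versus-matching subtlety is a genuine gap-fill: the paper nowhere argues that the depth-only test of Algorithm~\ref{alg:is_balanced} suffices for the symbol-level pairing in Algorithm~\ref{alg:convert_input_seq} to be well defined, whereas you make explicit that the counter equals the stack depth (every call pushes exactly one symbol, every return pops exactly one), so nonnegativity alone already rules out underflow and a final value of zero forces an empty stack. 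Your version buys a self-contained, mechanically checkable argument at the cost of length; the paper buys brevity by leaning on the true but unproven determinism of the stack history.
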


\begin{proof}
Let \(w = a_1a_2\ldots a_n\) be a well‑matched sequence. During execution on VDPA \(M\), the stack content at each step is uniquely determined by the input prefix. Define
\[
  \Sigma' \;=\; \Sigma_{\mathit{internal}}
    \;\cup\;\Sigma_{\mathit{call}}
    \;\cup\;
    \{\,(\mathit{return},\mathit{call}) \mid \mathit{return}\in\Sigma_{\mathit{return}},
           \,\mathit{call}\in\Sigma_{\mathit{call}}\}.
\]
Then the transformation \(T(w)=b_1b_2\ldots b_n\) is given by
\begin{align}
    b_i = 
    \begin{cases}
        a_i, & a_i \in \Sigma_{\mathit{call}}\cup \Sigma_{\mathit{internal}},\\
        (a_i,\,c_j), & a_i\in \Sigma_{\mathit{return}}
                       \text{ and }
                       c_j\text{ is its matching call symbol.}
    \end{cases}
\end{align}
This encoding is bijective: given \(T(w)\), we can reconstruct both \(w\) and its stack behavior. Therefore, \(T(w)\) preserves all information necessary to determine acceptance.
\end{proof}

\begin{lemma}[RPNI Correctness on Stack‑aware Alphabet]
\label{lem:rpni-correctness}
RPNI correctly learns regular languages over the finite alphabet, including the stack‑aware alphabet \(\Sigma'\).
\end{lemma}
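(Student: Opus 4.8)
The plan is to reduce this lemma to the classical correctness guarantee of RPNI, which is already established for DFA inference over an arbitrary finite alphabet~\cite{rpni,10.5555/1830440}. The only thing that distinguishes our setting from the textbook statement is the particular alphabet \(\Sigma'\) over which RPNI is invoked, so the whole argument amounts to verifying that \(\Sigma'\) satisfies the hypotheses the classical guarantee requires. Concretely, that result states that RPNI identifies any regular language in the limit: for every target regular language there is a finite characteristic sample such that, whenever the training set is consistent with the target and contains this characteristic sample, RPNI returns the minimal DFA recognizing exactly the target language; and for any input it always terminates with a DFA consistent with the sample.

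First I would observe that \(\Sigma'\) is finite. Indeed, from the cardinality already computed,
\[
  |\Sigma'| \;=\; |\Sigma_{\mathit{internal}}|
                 + |\Sigma_{\mathit{call}}|
                 + |\Sigma_{\mathit{return}}|\cdot|\Sigma_{\mathit{call}}|,
\]
which is finite whenever \(\Sigma\) is. Second, I would argue that every step of RPNI is purely syntactic in its alphabet: constructing the prefix tree acceptor, fixing the canonical breadth-first state ordering, and testing each candidate merge for determinism and for consistency with the negative examples all treat each element of \(\Sigma'\) — including every paired symbol \((r,c)\) — as an atomic, indivisible letter. No step inspects the internal structure of a symbol. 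Consequently the textbook proof transfers verbatim with \(\Sigma\) replaced by \(\Sigma'\), and RPNI returns a DFA consistent with the transformed sample, coinciding with the target DFA once a characteristic sample for the target is present. The remaining hypothesis to discharge is that the sample handed to RPNI is consistent with some regular language over \(\Sigma'\); this is exactly what Lemma~\ref{lem:transformation-preserves} provides, since the stack-aware transformation flattens the well-matched fragment of the visibly pushdown language into a regular language over \(\Sigma'\).

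I expect the main obstacle to be conceptual rather than technical: the lemma is essentially an invocation of an external theorem, and the real work is convincing the reader that enlarging and restructuring the alphabet does not invalidate any step of RPNI. The natural worry is that the product-like structure of \(\Sigma'\) (pairs of return and call symbols) might interact with the merge heuristic; the resolution is simply that the atomicity of the letters makes such interaction impossible, so the guarantee is inherited unchanged. The genuine cost of the larger alphabet — namely that the characteristic sample over \(\Sigma'\) can be substantially larger than over \(\Sigma\) — is a sample-complexity consideration, already flagged in the discussion preceding Theorem~\ref{thm:papni-correctness}, and does not bear on correctness.
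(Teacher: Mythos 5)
Your proposal is correct and takes essentially the same route as the paper, whose entire proof is a two-sentence appeal to the classical correctness of RPNI~\cite{rpni,10.5555/1830440} over an arbitrary finite alphabet; you simply make explicit what the paper leaves implicit, namely the finiteness of \(\Sigma'\) and the fact that RPNI treats each paired symbol \((r,c)\) as an atomic letter. One minor attribution point: the regularity of the flattened language over \(\Sigma'\) is really the content of Lemma~\ref{lem:vdpa-dfa-correspondence} rather than Lemma~\ref{lem:transformation-preserves}, but this does not affect the validity of your argument.
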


\begin{proof}
This follows directly from the correctness of RPNI as proven in~\cite{rpni,10.5555/1830440}. The fact that \(\Sigma'\) encodes stack actions does not affect RPNI’s convergence or correctness guarantees.
\end{proof}

\begin{proof}[Proof of Soundness]
Let \(M_{\textit{learned}}\) be the VDPA learned by PAPNI and let \(w\) be any sequence accepted by \(M_{\textit{learned}}\). By construction, \(w\) was classified as positive by the DFA learned by RPNI over \(\Sigma'\). By Lemma~\ref{lem:rpni-correctness}, RPNI’s classification matches the transformed training data. That data contains only well‑matched sequences from the target language (Lemma~\ref{lem:wellmatched-filtering}), and the transformation preserves membership (Lemma~\ref{lem:transformation-preserves}). Therefore, \(w \in L(M_{\textit{target}})\).
\end{proof}

\subsection{Proof of Completeness}

\begin{lemma}[Transformation Injectivity]
\label{lem:transformation-injective}
The stack‑aware transformation \(T\) is injective on well‑matched sequences.
\end{lemma}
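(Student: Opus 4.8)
The plan is to establish injectivity by exhibiting an explicit left inverse of \(T\), namely a projection \(\pi : (\Sigma')^* \to \Sigma^*\) that simply erases the stack annotations. Since \(T\) acts symbol by symbol and preserves length (so \(|T(w)| = |w|\) for every well-matched \(w\)), it suffices to define \(\pi\) on individual letters of \(\Sigma'\) and extend it homomorphically. Concretely, I would set \(\pi(a) = a\) for every plain letter \(a \in \Sigma_{\mathit{call}} \cup \Sigma_{\mathit{internal}}\), and \(\pi((r,c)) = r\) for every paired return symbol \((r,c)\), discarding the recorded top-of-stack symbol \(c\).

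Next I would verify that \(\pi \circ T\) is the identity on well-matched sequences. Writing \(w = a_1\ldots a_n\) and \(T(w) = b_1\ldots b_n\), the argument runs position by position: if \(a_i \in \Sigma_{\mathit{call}} \cup \Sigma_{\mathit{internal}}\) then \(b_i = a_i\) and \(\pi(b_i) = a_i\); if \(a_i \in \Sigma_{\mathit{return}}\) then \(b_i = (a_i,c)\) for the matching call symbol \(c\), and \(\pi(b_i) = a_i\) by the first-component projection. In both cases \(\pi(b_i) = a_i\), so \(\pi(T(w)) = w\). The existence of this left inverse immediately yields the claim: if \(T(w) = T(w')\) for well-matched \(w, w'\), then applying \(\pi\) gives \(w = \pi(T(w)) = \pi(T(w')) = w'\).

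The only point requiring care — and the place where a careless argument could slip — is confirming that the two defining cases of \(\pi\) never conflict, i.e. that \(\pi\) is genuinely well defined on \(\Sigma'\). This hinges on the structural disjointness of the stack-aware alphabet: the plain letters \(\Sigma_{\mathit{call}} \cup \Sigma_{\mathit{internal}}\) and the annotated pairs \(\{(r,c)\}\) are syntactically distinct objects in \(\Sigma'\), so each \(b_i\) falls into exactly one case and its first component is read off unambiguously. Here I would also invoke the disjointness \(\Sigma_{\mathit{call}} \cap \Sigma_{\mathit{return}} = \emptyset\) from the VDPA definition to rule out a letter being simultaneously a call and a return. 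Notably, unlike Lemma~\ref{lem:transformation-preserves}, this argument does not require the annotated call symbol \(c\) to be the \emph{correct} match; the original letter is recovered from the return symbol alone, so injectivity holds independently of the stack-matching semantics. Injectivity is thus the forward half of the bijectivity already asserted in Lemma~\ref{lem:transformation-preserves}, and the left-inverse construction is what makes that half precise.
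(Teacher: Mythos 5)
Your proof is correct and takes essentially the same approach as the paper: the paper's one-line argument is that \(T\) only pairs each return symbol with its uniquely determined matching call and can therefore be undone, which is precisely what your explicit erasing projection \(\pi\) with \(\pi \circ T = \mathrm{id}\) formalizes. Your closing observation that injectivity needs only the first-component projection, independent of whether the recorded call is the \emph{correct} match, is a mild sharpening of the paper's appeal to well-matchedness, but the underlying idea is identical.
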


\begin{proof}
Suppose \(T(w_1)=T(w_2)\) for well‑matched \(w_1,w_2\). The only modification in \(T\) is pairing each return symbol with its matching call. Since both sequences are well‑matched, these pairings are unique and deterministic. Hence \(w_1=w_2\).
\end{proof}

\begin{lemma}[VDPA–DFA Correspondence]
\label{lem:vdpa-dfa-correspondence}
There is a bijection between VDPAs over \(\Sigma\) and DFAs over the corresponding stack‑aware alphabet \(\Sigma'\).
\end{lemma}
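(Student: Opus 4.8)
The plan is to exhibit two explicit maps and prove they are mutually inverse. First I would define the forward map \(\Phi\) sending a simplified VDPA \(M = (Q, \Sigma, \Gamma, \delta, q_0, Z_0, F)\) to the DFA \(\Phi(M) = (Q, \Sigma', \delta', q_0, F)\) over the same state set, initial state, and accepting set, discarding the stack apparatus and reading \(\delta'\) off from \(\delta\) per symbol class: for \(a \in \Sigma_{\mathit{internal}}\), \(\delta'(q,a)\) is the state component of the internal transition from \(q\) on \(a\); for \(c \in \Sigma_{\mathit{call}}\), \(\delta'(q,c)\) is the state component of the call transition; and for a return symbol \(r\) matched against a top-of-stack \(c\), \(\delta'(q,(r,c))\) is the state component of \(\delta_{\mathit{return}}(q,r,c)\). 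The backward map \(\Psi\) reverses this, rebuilding from a DFA over \(\Sigma'\) the stack discipline fixed by the simplified model: call symbols push themselves, internal symbols leave the stack unchanged, and a DFA edge on the pair \((r,c)\) becomes a return transition that pops \(c\) and therefore requires \(c\) on top of the stack.

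I would then verify that both maps are well-defined. For \(\Phi\) this relies on the simplification baked into the model: because \(\Gamma = \Sigma_{\mathit{call}} \cup \{Z_0\}\) and each call pushes itself, the top-of-stack symbol encountered by a return is always exactly the call it matches, so the pair \((r,c)\) records precisely the information that \(\delta_{\mathit{return}}\) consumes, while internal and call transitions determine the next state independently of deeper stack content. For \(\Psi\), well-definedness follows from the disjoint partition \(\Sigma' = \Sigma_{\mathit{internal}} \cup \Sigma_{\mathit{call}} \cup \{(r,c)\}\): every symbol of \(\Sigma'\) lies in exactly one class and hence dictates exactly one kind of stack action, so the reconstructed \(\delta_{\mathit{call}}\), \(\delta_{\mathit{return}}\), and \(\delta_{\mathit{internal}}\) are uniquely determined. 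Return edges on an empty stack never arise, since \(\Sigma'\) pairs return symbols only with genuine call symbols and never with \(Z_0\), matching both the convention that popping the empty stack is an error and the fact that the transformation \(T\) emits no such pair.

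Next I would check \(\Psi \circ \Phi = \mathrm{id}\) and \(\Phi \circ \Psi = \mathrm{id}\). Both compositions leave \(Q\), \(q_0\), and \(F\) untouched by construction, so it suffices to track transitions, and these round-trip because the three symbol classes of \(\Sigma'\) are in one-to-one correspondence with the three transition families of \(\delta\), while the push and pop actions are not free parameters but are forced by the symbol class. As a totality check, the number of independent transition choices agrees on both sides: a simplified VDPA over a fixed \((Q, q_0, F)\) is determined by \(|Q| \cdot |\Sigma_{\mathit{internal}}|\) internal targets, \(|Q| \cdot |\Sigma_{\mathit{call}}|\) call targets, and \(|Q| \cdot |\Sigma_{\mathit{return}}| \cdot |\Sigma_{\mathit{call}}|\) return targets, summing to \(|Q| \cdot |\Sigma'|\), which is exactly the number of transitions of a DFA over \(\Sigma'\).

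Finally I would link the structural bijection to language semantics, the form in which it is used in the completeness argument: for any well-matched \(w\), the run of \(M\) on \(w\) and the run of \(\Phi(M)\) on the transformed word \(T(w)\) visit the same sequence of states, since each symbol triggers the matching DFA edge, and \(M\) halts with an empty stack precisely because \(w\) is well-matched (Lemma~\ref{lem:wellmatched-filtering}); hence \(w \in L(M)\) iff \(T(w) \in L(\Phi(M))\), and injectivity of \(T\) (Lemma~\ref{lem:transformation-injective}) makes this correspondence faithful. The main obstacle I expect is delimiting the exact class of VDPAs on which both directions are total: the forward map only makes sense once the simplified stack convention is fixed (self-pushing calls, stack-top-independent internal and call transitions), so the crux is arguing that this simplification sacrifices no expressiveness, ensuring the bijection ranges over all VDPAs of interest rather than an artificially restricted subclass.
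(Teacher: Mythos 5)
Your proposal follows essentially the same route as the paper's proof: an explicit forward map discarding the stack and translating return transitions to edges on paired symbols \((r,c)\), an explicit backward map reconstructing the stack discipline, verification that the two are mutually inverse, and language preservation restricted to well-matched sequences, with non-well-matched words delegated to the VDPA's stack. If anything, your version is more careful than the paper's sketch, since you make explicit the restriction to the simplified VDPAs (self-pushing calls, \(\Gamma = \Sigma_{\mathit{call}} \cup \{Z_0\}\)) on which the bijection is actually total --- a point the lemma's statement leaves implicit --- and you add the transition-counting and run-correspondence checks the paper omits.
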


\begin{proof}
\textit{Forward:} Given VDPA \(M=(Q,\Sigma,\Gamma,\delta,q_0,Z,F)\), construct DFA \(M'=(Q',\Sigma',\delta',q_0',F')\) by:
\begin{itemize}
  \item \(Q'=Q\), and $q_0' = q_0$
  \item Transitions on calls/internal symbols mimic \(\delta\) without stack effects.
  \item Transitions on \((r,c)\)\,—\,a return paired with its call\,—\,simulate a pop by design.
  \item Accepting runs in \(M'\) correspond to empty‑stack acceptance in \(M\).
\end{itemize}

\textit{Backward:} Given DFA \(M'\) over \(\Sigma'\), reconstruct VDPA \(M\) by:
\begin{itemize}
  \item Interpreting symbols in \(\Sigma_{\mathit{call}}\) as pushes.
  \item Interpreting \((r,c)\) as pops matching \(c\).
  \item Keeping internal actions unchanged.
\end{itemize}

These constructions are inverses and preserve languages w.r.t. well-matched sequences. 

% TODO mention that DFA only works on well-matched sequences and may incorrectly accept non-well-matched sequences -> DONE.
However, it is important to note that the DFA does not have the capability of corretcly rejecting all non-well-matched sequences. This task is done by the stack of the VDPA, which is not available in the DFA. Since by Lemma~\ref{lem:wellmatched-filtering}, none of these words is in the language, and they will be rejected by the VDPA via the stack by definition, this does not pose a problem in the context of this proof.
\end{proof}

\begin{proof}[Proof of Completeness]
Let \(w\in L(M_{\textit{target}})\). By Lemma~\ref{lem:wellmatched-filtering}, \(w\) is well‑matched, and by Lemma~\ref{lem:transformation-preserves}, \(T(w)\) encodes its stack behavior. Lemma~\ref{lem:vdpa-dfa-correspondence} guarantees a DFA over \(\Sigma'\) that accepts \(T(w)\). If \(D\) is sufficient for RPNI to learn that DFA, then \(T(w)\) is accepted, and hence \(w\) is accepted by the learned VDPA. Thus \(L(M_{\textit{target}})\subseteq L(M_{\textit{learned}})\).
\end{proof}

\subsection{Proof of Termination}

\begin{theorem}[Termination]
\label{thm:termination}
PAPNI terminates in finite time and produces a finite automaton.
\end{theorem}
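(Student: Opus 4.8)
The plan is to decompose PAPNI into its four constituent phases — filtering, transformation, RPNI invocation, and DFA-to-VDPA conversion — and argue that each phase terminates on finite input, so that their composition terminates; the finiteness of the output then follows from the finiteness of the DFA that RPNI returns. First I would fix the hypothesis that the training set \(D\) is finite and that every sequence in \(D\) has finite length. The filtering step (Algorithm~\ref{alg:is_balanced}) performs a single linear scan per sequence, incrementing or decrementing a bounded counter, so it terminates on each input; since it only ever discards sequences, the surviving set \(D' \subseteq D\) is again finite. The transformation step (Algorithm~\ref{alg:convert_input_seq}) likewise makes one linear pass per sequence, maintaining a stack whose depth is bounded by the number of call symbols seen so far; because each surviving sequence is finite and well-matched, every push is eventually matched by a pop and the loop halts, yielding a finite stack-aware sequence. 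Hence the transformed dataset is finite.

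Next I would establish that the stack-aware alphabet \(\Sigma'\) is finite. Since \(\Sigma_{\mathit{internal}}\), \(\Sigma_{\mathit{call}}\), and \(\Sigma_{\mathit{return}}\) are all finite by definition, the counting identity \(|\Sigma'| = |\Sigma_{\mathit{internal}}| + |\Sigma_{\mathit{call}}| + |\Sigma_{\mathit{return}}|\cdot|\Sigma_{\mathit{call}}|\) stated earlier shows \(\Sigma'\) is finite. This is precisely the precondition under which RPNI's termination guarantee (invoked in Lemma~\ref{lem:rpni-correctness}) applies. I would then appeal to the standard RPNI argument: RPNI builds a prefix-tree acceptor whose number of nodes is bounded by the total length of the transformed sequences, and then processes candidate state pairs in breadth-first order, each merge attempt either succeeding (and strictly decreasing the state count) or failing; as only finitely many pairs exist and the state count cannot decrease indefinitely, the merging phase halts and returns a DFA with finitely many states.

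Finally, the conversion from the learned DFA over \(\Sigma'\) to a VDPA over \(\Sigma\), justified by Lemma~\ref{lem:vdpa-dfa-correspondence}, only relabels transitions by attaching the appropriate push/pop annotations and adjoining a stack, leaving the finite state set unchanged; as this is a single pass over the finite transition table, it terminates and produces a VDPA with the same finite number of states. Composing the four terminating phases gives termination of PAPNI, and the finiteness of the returned DFA carries over to the VDPA, establishing that the output is a finite automaton. The main obstacle I anticipate is not any single phase in isolation but the need to verify that the reduction keeps the alphabet \(\Sigma'\) finite, since RPNI's termination and finiteness guarantees are contingent on a finite alphabet; once finiteness of \(\Sigma'\) is secured, the remaining steps are a routine composition of terminating finite passes.
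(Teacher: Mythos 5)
Your proof is correct and takes essentially the same approach as the paper's: a phase-by-phase termination argument in which filtering and transformation are linear passes over each finite sequence, the stack-aware alphabet \(\Sigma'\) is finite by the counting identity, and RPNI terminates on finite data over a finite alphabet, so the composition terminates and yields a finite VDPA. The only differences are cosmetic --- you unpack RPNI's termination via the explicit state-merging argument where the paper simply cites \cite{rpni}, and you spell out the final DFA-to-VDPA conversion as a fourth phase, which the paper's three-step proof leaves implicit.
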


\begin{proof}
PAPNI consists of three steps:
\begin{enumerate}
  \item \textit{Filtering}: Checking well‑matchedness takes \(O(|w|)\) per sequence.
  \item \textit{Transformation}: Computing \(T(w)\) takes \(O(|w|)\) per sequence.
  \item \textit{Learning}: RPNI terminates in polynomial time~\cite{rpni}.
\end{enumerate}
The stack‑aware alphabet \(\Sigma'\) is finite:
\[
  |\Sigma'|\;\le\;|\Sigma_{\mathit{internal}}|
               +|\Sigma_{\mathit{call}}|
               +|\Sigma_{\mathit{return}}|\cdot|\Sigma_{\mathit{call}}|.
\]
Since \(D\) is finite, the entire procedure terminates and returns a finite VDPA.
\end{proof}

\section{Experimental Evaluation}\label{sec:evaluation}

We evaluate PAPNI on 13 deterministic context-free grammars. Some of these languages are also used in the literature~\cite{Yellin2021SynthesizingCG}. Those languages include variations of $X^nY^n$ languages, Dyck languages~\cite{10.5555/1096945}, variations of Dyck languages, such as languages that accept only even or odd numbers of parentheses, and a language that accepts simple arithmetic expressions. We examine the generalization capabilities of PAPNI and compare it to RPNI.

The evaluation was performed on a  Dell Latitude 5410 with an Intel Core i7-10610U processor, and 8 GB of RAM running Windows 10. We used \textsc{AALpy} v.1.5.1., an automata learning library with a mature and efficient RPNI implementation, on top of which we implemented PAPNI and added it to the library. 
Encoding and a short description of all deterministic context-free grammars under learning can be found in \textsc{AALpy}'s benchmarking utility~\footnote{\url{https://github.com/DES-Lab/AALpy/blob/master/aalpy/utils/BenchmarkVpaModels.py}}.

\subsection{Data Generation}
For all languages under test, we generated 10000 random input sequences. For each sequence, we randomly (uniform sampling) determined its length in the range [4, 50], and each input in the sequence was uniformly sampled from the input alphabet. 
Note that all positive sequences are by construction well-matched, while negative sequences are a mixture of well-matched negative sequences and not well-matched negative sequences.

Generated sequences were split into the learning and evaluation set, each containing $\sim$ 5000 sequences. We made sure that both the learning and evaluation set contained both positive and negative examples, otherwise, our evaluation metric (F1 score) would be spurious. Learning and evaluation sets are disjoint, as sequences that are found both in the learning and in the evaluation set would, by construction, be correctly evaluated (remember that RPNI, and by extension PAPNI, correctly classifies all sequences found in the learning set).

\subsection{Evaluation Metric}

In automata learning, the quality of learned models is often evaluated by examining their accuracy, that is what percentage of evaluation sequences are correctly classified. However, this accuracy metric is especially unsuitable for the evaluation of learning of deterministic context-free language learning due to the imbalance of positive and negative sample sizes. That is, a set of randomly generated input sequences will consist of negative sequences and a model that rejects all sequences might achieve high accuracy by simply rejecting all sequences. For example, if the evaluation set consists of 90\% negative sequences, a model that rejects all sequences would have an accuracy of 90\%, even though it doesn’t reflect the true behavior of the language or its ability to generate positive sequences. To avoid such an evaluation bias, we use an F1 score. 

To calculate the F1 score we considered true positives (TP), false positives (FP), false negatives (FN), and true negatives (TN). They represent the counts of correctly and incorrectly classified positive and negative sequences. Precision, recall, and F1 score are calculated as follows:

\[
\text{Precision} = \frac{\text{TP}}{\text{TP} + \text{FP}}, \quad 
\text{Recall} = \frac{\text{TP}}{\text{TP} + \text{FN}}, \quad 
\text{F1} = 2 \cdot \frac{\text{Precision} \cdot \text{Recall}}{\text{Precision} + \text{Recall}}
\]

The F1 score balances precision (how many of the predicted positives are actually correct) and recall (how many of the actual positives are successfully identified), focusing on how well the model performs with positive samples, which are often underrepresented. The higher the F1 score, the more accurate the learned model is, with an F1 score of 1.0 indicating that the model perfectly classifies all sequences in the evaluation set.

\begin{table}[t]
\centering
\caption{Comparison of RPNI and PAPNI on 13 context-free languages.}
\label{tab:results_rpni_vs_papni}
\resizebox{\textwidth}{!}{
\begin{tabular}{ccc|cccc|cccc}
 &
   &
   &
  \multicolumn{4}{c|}{RPNI} &
  \multicolumn{4}{c}{PAPNI} \\ \hline
\multicolumn{1}{c|}{Experiment} &
  \multicolumn{1}{c|}{\begin{tabular}[c]{@{}c@{}}Learning \\ Data (+/-)\end{tabular}} &
  \begin{tabular}[c]{@{}c@{}}Training \\ Data (+/-)\end{tabular} &
  \multicolumn{1}{c|}{\begin{tabular}[c]{@{}c@{}}Model\\ Size\end{tabular}} &
  \multicolumn{1}{c|}{Precision} &
  \multicolumn{1}{c|}{Recall} &
  F1 &
  \multicolumn{1}{c|}{\begin{tabular}[c]{@{}c@{}}Model\\ Size\end{tabular}} &
  \multicolumn{1}{c|}{Precision} &
  \multicolumn{1}{c|}{Recall} &
  F1 \\ \hline
\multicolumn{1}{c|}{1} &
  \multicolumn{1}{c|}{13/4988} &
  12/4987 &
  \multicolumn{1}{c|}{10} &
  \multicolumn{1}{c|}{0.571} &
  \multicolumn{1}{c|}{0.333} &
  0.421 &
  \multicolumn{1}{c|}{3} &
  \multicolumn{1}{c|}{1.0} &
  \multicolumn{1}{c|}{1.0} &
  1.0 \\ \hline
\multicolumn{1}{c|}{2} &
  \multicolumn{1}{c|}{2089/2913} &
  2087/2911 &
  \multicolumn{1}{c|}{33} &
  \multicolumn{1}{c|}{0.903} &
  \multicolumn{1}{c|}{0.928} &
  0.916 &
  \multicolumn{1}{c|}{3} &
  \multicolumn{1}{c|}{0.996} &
  \multicolumn{1}{c|}{1.0} &
  0.998 \\ \hline
\multicolumn{1}{c|}{3} &
  \multicolumn{1}{c|}{10/4992} &
  8/4990 &
  \multicolumn{1}{c|}{7} &
  \multicolumn{1}{c|}{0.75} &
  \multicolumn{1}{c|}{0.75} &
  0.75 &
  \multicolumn{1}{c|}{3} &
  \multicolumn{1}{c|}{1.0} &
  \multicolumn{1}{c|}{0.75} &
  0.85 \\ \hline
\multicolumn{1}{c|}{4} &
  \multicolumn{1}{c|}{386/4615} &
  385/4614 &
  \multicolumn{1}{c|}{53} &
  \multicolumn{1}{c|}{0.74} &
  \multicolumn{1}{c|}{0.79} &
  0.768 &
  \multicolumn{1}{c|}{2} &
  \multicolumn{1}{c|}{1.0} &
  \multicolumn{1}{c|}{1.0} &
  1.0 \\ \hline
\multicolumn{1}{c|}{5} &
  \multicolumn{1}{c|}{200/4802} &
  198/4800 &
  \multicolumn{1}{c|}{67} &
  \multicolumn{1}{c|}{0.229} &
  \multicolumn{1}{c|}{0.570} &
  0.327 &
  \multicolumn{1}{c|}{2} &
  \multicolumn{1}{c|}{1.0} &
  \multicolumn{1}{c|}{1.0} &
  1.0 \\ \hline
\multicolumn{1}{c|}{6} &
  \multicolumn{1}{c|}{159/4843} &
  157/4841 &
  \multicolumn{1}{c|}{55} &
  \multicolumn{1}{c|}{0.101} &
  \multicolumn{1}{c|}{0.369} &
  0.159 &
  \multicolumn{1}{c|}{3} &
  \multicolumn{1}{c|}{0.963} &
  \multicolumn{1}{c|}{1.0} &
  0.981 \\ \hline
\multicolumn{1}{c|}{7} &
  \multicolumn{1}{c|}{44/4957} &
  43/4956 &
  \multicolumn{1}{c|}{42} &
  \multicolumn{1}{c|}{0.101} &
  \multicolumn{1}{c|}{0.372} &
  0.159 &
  \multicolumn{1}{c|}{7} &
  \multicolumn{1}{c|}{0.693} &
  \multicolumn{1}{c|}{1.0} &
  0.819 \\ \hline
\multicolumn{1}{c|}{8} &
  \multicolumn{1}{c|}{2501/2501} &
  2499/2499 &
  \multicolumn{1}{c|}{246} &
  \multicolumn{1}{c|}{0.578} &
  \multicolumn{1}{c|}{0.615} &
  0.596 &
  \multicolumn{1}{c|}{1} &
  \multicolumn{1}{c|}{1.0} &
  \multicolumn{1}{c|}{1.0} &
  1.0 \\ \hline
\multicolumn{1}{c|}{9} &
  \multicolumn{1}{c|}{2501/2501} &
  2499/2499 &
  \multicolumn{1}{c|}{85} &
  \multicolumn{1}{c|}{0.863} &
  \multicolumn{1}{c|}{0.922} &
  0.892 &
  \multicolumn{1}{c|}{2} &
  \multicolumn{1}{c|}{1.0} &
  \multicolumn{1}{c|}{1.0} &
  1.0 \\ \hline
\multicolumn{1}{c|}{10} &
  \multicolumn{1}{c|}{2501/2501} &
  2499/2499 &
  \multicolumn{1}{c|}{63} &
  \multicolumn{1}{c|}{0.907} &
  \multicolumn{1}{c|}{0.952} &
  0.929 &
  \multicolumn{1}{c|}{2} &
  \multicolumn{1}{c|}{1.0} &
  \multicolumn{1}{c|}{1.0} &
  1.0 \\ \hline
\multicolumn{1}{c|}{11} &
  \multicolumn{1}{c|}{10/4991} &
  9/4990 &
  \multicolumn{1}{c|}{20} &
  \multicolumn{1}{c|}{0.375} &
  \multicolumn{1}{c|}{0.66} &
  0.480 &
  \multicolumn{1}{c|}{10} &
  \multicolumn{1}{c|}{0.6} &
  \multicolumn{1}{c|}{0.6} &
  0.6 \\ \hline
\multicolumn{1}{c|}{12} &
  \multicolumn{1}{c|}{222/4779} &
  221/4778 &
  \multicolumn{1}{c|}{85} &
  \multicolumn{1}{c|}{0.630} &
  \multicolumn{1}{c|}{0.656} &
  0.643 &
  \multicolumn{1}{c|}{23} &
  \multicolumn{1}{c|}{0.964} &
  \multicolumn{1}{c|}{0.977} &
  0.970 \\ \hline
\multicolumn{1}{c|}{13} &
  \multicolumn{1}{c|}{2501/2501} &
  2499/2499 &
  \multicolumn{1}{c|}{39} &
  \multicolumn{1}{c|}{0.975} &
  \multicolumn{1}{c|}{0.987} &
  0.981 &
  \multicolumn{1}{c|}{2} &
  \multicolumn{1}{c|}{1.0} &
  \multicolumn{1}{c|}{1.0} &
  1.0
\end{tabular}
}
\end{table}

\subsection{Evaluation Results} 
In Table~\ref{tab:results_rpni_vs_papni} we can see the results of RPNI and PAPNI comparison. We observed that for 7 out of 13 languages under test, PAPNI learned a model that can perfectly predict all sequences. Upon manual inspection, these models that achieved an F1 score of 1.0 fully conform to the languages used for data generation.
Other models achieve high F1 scores ($\ge$ 0.8), while for these languages RPNI failed to learn a model that meaningfully captures the language under learning. In addition, we can observe that PAPNI learners are much smaller models compared to RPNI. In cases where F1 score was 1.0, we analyzed the models and observed that they conform to the ground truth models.
% The last sentence is a duplicate from slightly further above - delete it?

Furthermore, we repeated each experiment 20 times and computed the mean and standard deviation of the F1 score for RPNI and PAPNI. The results of this evaluation can be seen in Figure~\ref{fig:f1_statistics}. We observe that the accuracy of the learned models for both techniques depends on the quality of the learning data set, and by observing the standard deviations we conclude that PAPNI is slightly more robust with respect to the learning dataset.

Finally, we would like to point out that the results of the evaluation should, and indeed do, favor PAPNI, as PAPNI is designed to learn deterministic context-free grammars, while RPNI learns deterministic regular languages, which are lower in the Chomsky hierarchy and therefore cannot completely capture the input-output behavior of higher-order languages. For any deterministic context-free language, we could find infinitely many counterexamples for the regular approximations learned by RPNI, by following the pumping lemma. Conversely, in the learning of regular languages, PAPNI achieves the same accuracy as RPNI, as the preprocessing steps are ignored (due to the lack of push/pop symbols), and PAPNI is then equivalent to its underlying algorithm, that is, RPNI. 

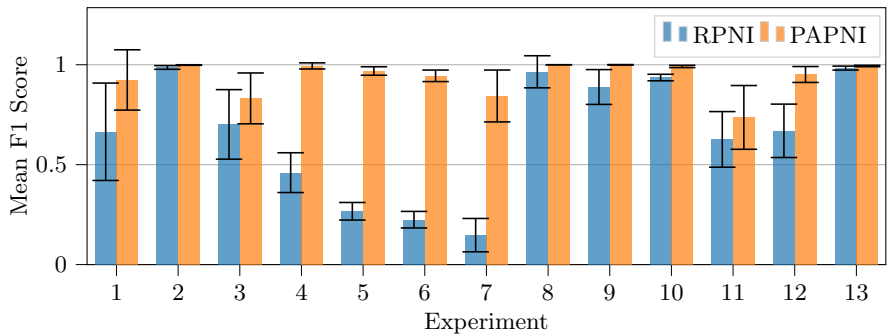
\begin{figure}[t]
    \centering
    % This file was created with tikzplotlib v0.10.1.
\begin{tikzpicture}

\definecolor{darkgray176}{RGB}{176,176,176}
\definecolor{darkorange25512714}{RGB}{255,127,14}
\definecolor{lightgray204}{RGB}{204,204,204}
\definecolor{steelblue31119180}{RGB}{31,119,180}

\begin{axis}[
legend cell align={left},
legend style={
  fill opacity=0.8,
  draw opacity=1,
  text opacity=1,
  at={(0.71,0.97)},
  anchor=north west,
  draw=lightgray204,
  legend columns=2,
  legend entries={RPNI,PAPNI}
},
tick align=outside,
tick pos=left,
width=\textwidth,
height=5cm,
x grid style={darkgray176},
xlabel={Experiment},
xmin=0.523, xmax=13.477,
xtick style={color=black},
y grid style={darkgray176},
ylabel={Mean F1 Score},
ymajorgrids,
ymin=0, ymax=1.28562782114469,
ytick style={color=black},
xtick={1,2,3,4,5,6,7,8,9,10,11,12,13},
ylabel near ticks,
ylabel shift={-3pt},
]
\draw[draw=none,fill=steelblue31119180,fill opacity=0.7] (axis cs:0.65,0) rectangle (axis cs:1,0.664659415101108);
\addlegendimage{ybar,ybar legend,draw=none,fill=steelblue31119180,fill opacity=0.7}
\addlegendentry{RPNI}

\draw[draw=none,fill=steelblue31119180,fill opacity=0.7] (axis cs:1.65,0) rectangle (axis cs:2,0.986327333891043);
\draw[draw=none,fill=steelblue31119180,fill opacity=0.7] (axis cs:2.65,0) rectangle (axis cs:3,0.701467967235894);
\draw[draw=none,fill=steelblue31119180,fill opacity=0.7] (axis cs:3.65,0) rectangle (axis cs:4,0.460118013429779);
\draw[draw=none,fill=steelblue31119180,fill opacity=0.7] (axis cs:4.65,0) rectangle (axis cs:5,0.266845043704471);
\draw[draw=none,fill=steelblue31119180,fill opacity=0.7] (axis cs:5.65,0) rectangle (axis cs:6,0.224730721165765);
\draw[draw=none,fill=steelblue31119180,fill opacity=0.7] (axis cs:6.65,0) rectangle (axis cs:7,0.147388018247868);
\draw[draw=none,fill=steelblue31119180,fill opacity=0.7] (axis cs:7.65,0) rectangle (axis cs:8,0.964773331337815);
\draw[draw=none,fill=steelblue31119180,fill opacity=0.7] (axis cs:8.65,0) rectangle (axis cs:9,0.888607384307213);
\draw[draw=none,fill=steelblue31119180,fill opacity=0.7] (axis cs:9.65,0) rectangle (axis cs:10,0.935971722889833);
\draw[draw=none,fill=steelblue31119180,fill opacity=0.7] (axis cs:10.65,0) rectangle (axis cs:11,0.62647540571844);
\draw[draw=none,fill=steelblue31119180,fill opacity=0.7] (axis cs:11.65,0) rectangle (axis cs:12,0.669400365910855);
\draw[draw=none,fill=steelblue31119180,fill opacity=0.7] (axis cs:12.65,0) rectangle (axis cs:13,0.983187864238041);
\draw[draw=none,fill=darkorange25512714,fill opacity=0.7] (axis cs:1,0) rectangle (axis cs:1.35,0.923998556998557);
\addlegendimage{ybar,ybar legend,draw=none,fill=darkorange25512714,fill opacity=0.7}
\addlegendentry{PAPNI}

\draw[draw=none,fill=darkorange25512714,fill opacity=0.7] (axis cs:2,0) rectangle (axis cs:2.35,0.998577088102643);
\draw[draw=none,fill=darkorange25512714,fill opacity=0.7] (axis cs:3,0) rectangle (axis cs:3.35,0.831651207638824);
\draw[draw=none,fill=darkorange25512714,fill opacity=0.7] (axis cs:4,0) rectangle (axis cs:4.35,0.994085517669056);
\draw[draw=none,fill=darkorange25512714,fill opacity=0.7] (axis cs:5,0) rectangle (axis cs:5.35,0.968746051389512);
\draw[draw=none,fill=darkorange25512714,fill opacity=0.7] (axis cs:6,0) rectangle (axis cs:6.35,0.944584267411357);
\draw[draw=none,fill=darkorange25512714,fill opacity=0.7] (axis cs:7,0) rectangle (axis cs:7.35,0.843684872080361);
\draw[draw=none,fill=darkorange25512714,fill opacity=0.7] (axis cs:8,0) rectangle (axis cs:8.35,1);
\draw[draw=none,fill=darkorange25512714,fill opacity=0.7] (axis cs:9,0) rectangle (axis cs:9.35,0.99961136122834);
\draw[draw=none,fill=darkorange25512714,fill opacity=0.7] (axis cs:10,0) rectangle (axis cs:10.35,0.99152807421987);
\draw[draw=none,fill=darkorange25512714,fill opacity=0.7] (axis cs:11,0) rectangle (axis cs:11.35,0.73653540602844);
\draw[draw=none,fill=darkorange25512714,fill opacity=0.7] (axis cs:12,0) rectangle (axis cs:12.35,0.951229208296121);
\draw[draw=none,fill=darkorange25512714,fill opacity=0.7] (axis cs:13,0) rectangle (axis cs:13.35,0.993689299098538);
\path [draw=black, semithick]
(axis cs:0.825,0.420967116711899)
--(axis cs:0.825,0.908351713490316);

\path [draw=black, semithick]
(axis cs:1.825,0.97711495718675)
--(axis cs:1.825,0.995539710595337);

\path [draw=black, semithick]
(axis cs:2.825,0.527504111526857)
--(axis cs:2.825,0.875431822944932);

\path [draw=black, semithick]
(axis cs:3.825,0.360571140742828)
--(axis cs:3.825,0.559664886116729);

\path [draw=black, semithick]
(axis cs:4.825,0.222864821022376)
--(axis cs:4.825,0.310825266386566);

\path [draw=black, semithick]
(axis cs:5.825,0.183298942869219)
--(axis cs:5.825,0.266162499462311);

\path [draw=black, semithick]
(axis cs:6.825,0.063987097583473)
--(axis cs:6.825,0.230788938912264);

\path [draw=black, semithick]
(axis cs:7.825,0.884478048352822)
--(axis cs:7.825,1.04506861432281);

\path [draw=black, semithick]
(axis cs:8.825,0.801485038519738)
--(axis cs:8.825,0.975729730094688);

\path [draw=black, semithick]
(axis cs:9.825,0.919627120588052)
--(axis cs:9.825,0.952316325191615);

\path [draw=black, semithick]
(axis cs:10.825,0.487343518151496)
--(axis cs:10.825,0.765607293285384);

\path [draw=black, semithick]
(axis cs:11.825,0.535810466767591)
--(axis cs:11.825,0.802990265054118);

\path [draw=black, semithick]
(axis cs:12.825,0.973243454283841)
--(axis cs:12.825,0.993132274192241);

\addplot [semithick, black, mark=-, mark size=5, mark options={solid}, only marks]
table {%
0.825 0.420967116711899
1.825 0.97711495718675
2.825 0.527504111526857
3.825 0.360571140742828
4.825 0.222864821022376
5.825 0.183298942869219
6.825 0.063987097583473
7.825 0.884478048352822
8.825 0.801485038519738
9.825 0.919627120588052
10.825 0.487343518151496
11.825 0.535810466767591
12.825 0.973243454283841
};
\addplot [semithick, black, mark=-, mark size=5, mark options={solid}, only marks]
table {%
0.825 0.908351713490316
1.825 0.995539710595337
2.825 0.875431822944932
3.825 0.559664886116729
4.825 0.310825266386566
5.825 0.266162499462311
6.825 0.230788938912264
7.825 1.04506861432281
8.825 0.975729730094688
9.825 0.952316325191615
10.825 0.765607293285384
11.825 0.802990265054118
12.825 0.993132274192241
};
\path [draw=black, semithick]
(axis cs:1.175,0.773118083160788)
--(axis cs:1.175,1.07487903083633);

\path [draw=black, semithick]
(axis cs:2.175,0.997233936118174)
--(axis cs:2.175,0.999920240087112);

\path [draw=black, semithick]
(axis cs:3.175,0.704408365494136)
--(axis cs:3.175,0.958894049783511);

\path [draw=black, semithick]
(axis cs:4.175,0.978729883431111)
--(axis cs:4.175,1.009441151907);

\path [draw=black, semithick]
(axis cs:5.175,0.947534706722096)
--(axis cs:5.175,0.989957396056929);

\path [draw=black, semithick]
(axis cs:6.175,0.91577778786889)
--(axis cs:6.175,0.973390746953825);

\path [draw=black, semithick]
(axis cs:7.175,0.713906606956663)
--(axis cs:7.175,0.973463137204059);

\path [draw=black, semithick]
(axis cs:8.175,1)
--(axis cs:8.175,1);

\path [draw=black, semithick]
(axis cs:9.175,0.998414722872707)
--(axis cs:9.175,1.00080799958397);

\path [draw=black, semithick]
(axis cs:10.175,0.985799587775582)
--(axis cs:10.175,0.997256560664158);

\path [draw=black, semithick]
(axis cs:11.175,0.57701935698959)
--(axis cs:11.175,0.89605145506729);

\path [draw=black, semithick]
(axis cs:12.175,0.911397748060197)
--(axis cs:12.175,0.991060668532044);

\path [draw=black, semithick]
(axis cs:13.175,0.990097302462805)
--(axis cs:13.175,0.997281295734271);

\addplot [semithick, black, mark=-, mark size=5, mark options={solid}, only marks]
table {%
1.175 0.773118083160788
2.175 0.997233936118174
3.175 0.704408365494136
4.175 0.978729883431111
5.175 0.947534706722096
6.175 0.91577778786889
7.175 0.713906606956663
8.175 1
9.175 0.998414722872707
10.175 0.985799587775582
11.175 0.57701935698959
12.175 0.911397748060197
13.175 0.990097302462805
};
\addplot [semithick, black, mark=-, mark size=5, mark options={solid}, only marks]
table {%
1.175 1.07487903083633
2.175 0.999920240087112
3.175 0.958894049783511
4.175 1.009441151907
5.175 0.989957396056929
6.175 0.973390746953825
7.175 0.973463137204059
8.175 1
9.175 1.00080799958397
10.175 0.997256560664158
11.175 0.89605145506729
12.175 0.991060668532044
13.175 0.997281295734271
};
\end{axis}

\end{tikzpicture}
    \caption{Comparison between PAPNI and RPNI F1 scores on benchmark models. All experiments were repeated 20 times to account for randomness in data generation.}
    \label{fig:f1_statistics}
\end{figure}

\noindent
\textbf{Runtime and Memory Performance.}
We examined the memory and runtime footprint of RPNI and PAPNI on one of the learned models. We generated random input sequences of increasing lengths and observed the total runtime of both algorithms, along with the total memory footprint.

Since PAPNI is a preprocessing step to standard RPNI implementation, the runtime and memory performance are largely dependent on the underlying RPNI implementation. However, since PAPNI discards non-well-matched input sequences, it has a smaller memory fingerprint and faster runtime, due to a smaller dataset. This can be observed in Figure~\ref{fig:runtime_size_comparison}. In theory, PAPNI is expected to have the same memory and runtime footprint as RPNI when provided with the same number of valid sequences. However, in practice, PAPNI discards all non-well-matched input sequences, leading to a reduced size of the training dataset, which in turn leads to a more favorable runtime and memory footprint. This behavior is reflected in the results shown in Figure~\ref{fig:runtime_size_comparison}.

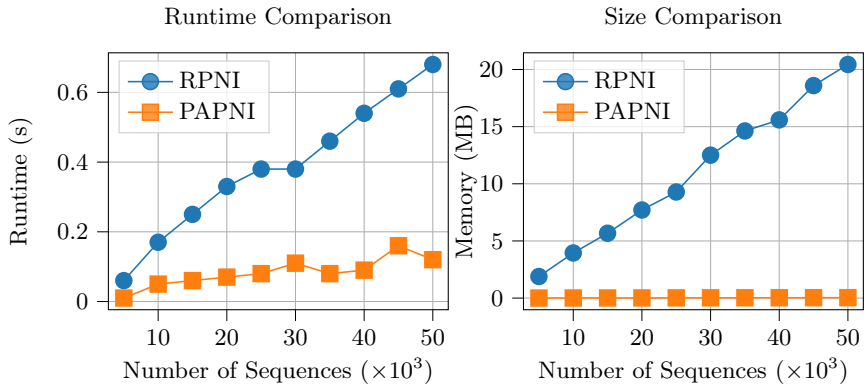
\begin{figure}[t]
    \centering
    % This file was created with tikzplotlib v0.10.1.
\begin{tikzpicture}

\definecolor{darkgray176}{RGB}{176,176,176}
\definecolor{darkorange25512714}{RGB}{255,127,14}
\definecolor{lightgray204}{RGB}{204,204,204}
\definecolor{steelblue31119180}{RGB}{31,119,180}

\begin{groupplot}[group style={group size=2 by 1}]
\nextgroupplot[
legend cell align={left},
legend style={
  fill opacity=0.8,
  draw opacity=1,
  text opacity=1,
  at={(0.03,0.97)},
  anchor=north west,
  draw=lightgray204
},
tick align=outside,
tick pos=left,
title={Runtime Comparison},
x grid style={darkgray176},
xlabel={Number of Sequences (\(\times 10^3\))},
xmajorgrids,
width=\textwidth / 2,
height=5cm,
xmin=2750, xmax=52250,
xtick style={color=black},
y grid style={darkgray176},
ylabel={Runtime (s)},
ymajorgrids,
ymin=-0.0235, ymax=0.7135,
ytick style={color=black},
ylabel near ticks,
ylabel shift={3pt},
xtick={10000,20000,30000,40000,50000}, % Explicit tick positions
xticklabels={10,20,30,40,50}, % Custom labels in thousands
scaled x ticks=false
]
\addplot [semithick, steelblue31119180, mark=*, mark size=3, mark options={solid}]
table {%
5000 0.06
10000 0.17
15000 0.25
20000 0.33
25000 0.38
30000 0.38
35000 0.46
40000 0.54
45000 0.61
50000 0.68
};
\addlegendentry{RPNI}
\addplot [semithick, darkorange25512714, mark=square*, mark size=3, mark options={solid}]
table {%
5000 0.01
10000 0.05
15000 0.06
20000 0.07
25000 0.08
30000 0.11
35000 0.08
40000 0.09
45000 0.16
50000 0.12
};
\addlegendentry{PAPNI}

\nextgroupplot[
legend cell align={left},
legend style={
  fill opacity=0.8,
  draw opacity=1,
  text opacity=1,
  at={(0.03,0.97)},
  anchor=north west,
  draw=lightgray204
},
tick align=outside,
tick pos=left,
title={Size Comparison},
x grid style={darkgray176},
xlabel={Number of Sequences (\(\times 10^3\))},
xmajorgrids,
xmin=2750, xmax=52250,
xtick style={color=black},
y grid style={darkgray176},
width=\textwidth / 2,
height=5cm,
ylabel={Memory (MB)},
ymajorgrids,
ymin=-1.01835541725159, ymax=21.4614353656769,
ytick style={color=black},
ylabel near ticks,
ylabel shift={-6pt},
xtick={10000,20000,30000,40000,50000}, % Explicit tick positions
xticklabels={10,20,30,40,50}, % Custom labels in thousands
scaled x ticks=false
]
\addplot [semithick, steelblue31119180, mark=*, mark size=3, mark options={solid}]
table {%
5000 1.8873348236084
10000 3.94776725769043
15000 5.67314720153809
20000 7.70704460144043
25000 9.28195762634277
30000 12.5037670135498
35000 14.6226177215576
40000 15.5918788909912
45000 18.5895900726318
50000 20.4396266937256
};
\addlegendentry{RPNI}
\addplot [semithick, darkorange25512714, mark=square*, mark size=3, mark options={solid}]
table {%
5000 0.00345325469970703
10000 0.00722322592972917
15000 0.0103801519427197
20000 0.0141015720463907
25000 0.0169831888834476
30000 0.0228781304003225
35000 0.0267549894895944
40000 0.0285284457130062
45000 0.0340133549600962
50000 0.037398365174769
};
\addlegendentry{PAPNI}
\end{groupplot}

\end{tikzpicture}
    \caption{Comparison of PAPNI's and RPNI's runtime and memory consumption.}
    \label{fig:runtime_size_comparison}
\end{figure}

\section{Discussion on possible limitations and their mitigation}\label{sec:limitations}

\noindent
\textbf{Data Availability.}
This limitation is assumed for all passive learning algorithms, like RPNI or \textsc{Alergia}. However, we point out that this limitation might be more prominent for PAPNI, since the algorithm infers a model based only on positive and negative \textit{well-matched} sequences.
Depending on the structure of the language under learning this limitation might be negligible, while for others it could severely limit the quality of the learned model.

Consider, for example, learning an XML or JSON from positive and negative examples. Intuitively, one might assume that we will find many well-matched positive sequences (valid XML and JSON files), while well-matched negative sequences are not so readily available. We might, for example, attempt to mitigate this issue by mutating the original positive examples and checking (with the syntax verifier) if the resulting example is still valid.

In addition, as mentioned at the beginning of Section~\ref{sec:proof}, PAPNI operates over an extended input alphabet that is larger than RPNI's input alphabet (due to all possible combinations of call and return symbols). Therefore, PAPNI requires a larger set of well-matched input sequences to converge to the ground truth, as the provided dataset should ideally cover all call-return pairs.   

\textbf{Absence of visibly-deterministic alphabet.}
As previously explained, PAPNI assumes the apriori knowledge of the decomposition of the input alphabet into push, pop, and internal symbols. We postulate that this limitation is minimal since in many real-world grammars, push and pop symbols are clearly defined (such as parentheses, square brackets, HTML tags, etc.). Alternatively, if such a decomposition of the input alphabet is not known, one might develop a heuristic that examines all positive and negative sequences and observes patterns that might indicate which symbols affect the stack or not.

\section{Conclusion}\label{sec:conclusion}

We introduce PAPNI, a passive learning algorithm that extends the classic RPNI implementations to visibly deterministic pushdown automata. PAPNI functions as a preprocessing step to RPNI: it starts by filtering out all non-well-matched input sequences. The remaining well-matched sequences are then transformed into an equivalent stack-aware representation, which is then passed to RPNI. As such, PAPNI reframes the challenge of learning deterministic context-free grammars into regular language learning over a stack-aware input alphabet.

Experimental evaluation on 13 benchmark languages demonstrates that PAPNI significantly outperforms RPNI on all context-free grammars under consideration, achieving perfect classification (F1 = 1.0) on 7 out of 13 test languages while maintaining small model sizes. Moreover, the preprocessing strategy used by PAPNI can be adapted to other passive learning algorithms, such as EDSM, suggesting a broader framework for extending regular grammar inference methods to context-free settings.

From a practical perspective, PAPNI could be used in protocol analysis and structured data processing. The algorithm's ability to learn XML, JSON, and programming language grammars from examples makes it practical for reverse engineering legacy systems and validating parsers for domain-specific languages. In recent years, we have also observed interest in applying automata learning to the modeling of recurrent neural networks (RNNs). The majority of such research~\cite{DBLP:conf/icgi/EyraudLJGCHS23,DBLP:conf/ifm/MuskardinAPT22,DBLP:conf/cdmake/MayrY18,DBLP:journals/ml/WeissGY24} focused on inference of DFAs from RNNs. PAPNI could be used to analyze RNN decision-making processes on more expressive context-free grammars, which in turn might reveal additional insights about RNN internal behavior.

\noindent
\textbf{Acknowledgements.} The research reported in this paper has been partly funded by the European Union’s Horizon 2020 research and innovation program within the framework of Chips Joint Undertaking (Grant No. 101112268). This work has been supported by Silicon Austria Labs (SAL), owned by the Republic of Austria, the Styrian Business Promotion Agency (SFG), the federal state of Carinthia, the Upper Austrian Research (UAR), and the Austrian Association for the Electric and Electronics Industry (FEEI).

\bibliographystyle{plain}
\bibliography{bibliography}

\newpage
\appendix
\section{Code Example}\label{apx:code}
The following code was used to produce the running example in Section~\ref{sec:method}. In order to make it run properly, a Python 3 installation is required. Furthermore, the \texttt{aalpy} library must be available, which can be installed, for example, using the command \texttt{pip install aalpy}. 

\begin{lstlisting}
from aalpy import run_RPNI, run_PAPNI
from aalpy.automata import VpaAlphabet

dataset = [
    ('', False),
    ('()', True),
    ('(())', True),
    ('()()', False),
    ('()()()', False),
    ('()(())', False),
    ('(', False),
    ('())', False),
    (')(', False),
    ('(()', False),
    ('(()))(', False),
]

vpa_alphabet = VpaAlphabet(
    internal_alphabet=[],
    call_alphabet=['('],
    return_alphabet=[')']
)

rpni_dfa = run_RPNI(dataset, automaton_type='dfa')
# RPNI model size: 5
print(f'RPNI model size: {rpni_dfa.size}')

papni_model = run_PAPNI(dataset, vpa_alphabet, 
                        algorithm='gsm')
                        
# PAPNI model size: 3
print(f'PAPNI model size: {papni_model.size}')

# For the visualizations of th learned models
# rpni_dfa.visualize(path='./rpni-model')
# papni_model.visualize(path='./papni-model')
\end{lstlisting}

\end{document}